\newtheoremstyle{theorem}
  {10pt}		  
  {10pt}  
  {\sl}  
  {\parindent}     
  {\bf}  
  {. }    
  { }    
  {}     
\theoremstyle{theorem}
\newtheorem{thm}{Theorem}
\newtheorem{cor}[thm]{Corollary}
\newtheorem{lem}[thm]{Lemma}
\newtheoremstyle{defi}
  {10pt}		  
  {10pt}  
  {\rm}  
  {\parindent}     
  {\bf}  
  {. }    
  { }    
  {}     
\theoremstyle{defi}
\newtheorem{defn}[thm]{Definition}
\newtheorem{rem}[thm]{Remark}
\newtheorem{prop}[thm]{Proposition}
\newtheorem{conj}[thm]{Conjecture}
\title{Energy Levels of Periodic Solutions of the Circular 2+2 Sitnikov Problem}
\author{Hugo Jim\'{e}nez P\'{e}rez}
\email{hjp@fciencias.unam.mx}
\address{Dept. of Math., Univ. Aut. Metrop. Iztapalapa, San Rafael Atlixco 186,
C.P. 09430, Iztapalapa, Mexico City}
\author{Ernesto A. Lacomba}
\email{lace@xanum.uam.mx}
\address{Dept. of Math., Univ. Aut. Metrop. Iztapalapa, San Rafael Atlixco 186,
C.P. 09430, Iztapalapa, Mexico City}
\subjclass{70F10, 70F16, 37J99}
\keywords{Hamiltonian systems, 2+2-body problem, Symplectic Regularization, Celestial Mechanics}
\begin{document}
\maketitle

\vspace{24pt}
{\begin{minipage}{24pc}
\footnotesize{\ \ \ \ 
We introduce a restricted four body problem in a 2+2 configuration
extending the classical circular Sitnikov problem to the circular double Sitnikov problem.
Since the secondary bodies are moving on the same perpendicular line 
where evolve the primaries, almost every solution is a collision orbit.
We extend the solutions beyond collisions with a symplectic 
regularization and study 
the set of energy surfaces that contain periodic orbits.}
\end{minipage}}
\vspace{10pt}

\section{Introduction}

One of the most important problems in celestial mechanics is the
Sitnikov problem \cite{Sit1}, because this was the first restricted
three body problem where the existence of oscillatory movements was
proved, as J. Chazy predicted in 1922 \cite{Cha1}. The Sitnikov
problem is a generalization of the Macmillan problem introduced in
\cite{Mac1} which is an integrable problem and it has been studied
by several mathematicians like Alekseev \cite{Ale1} and Moser
\cite{Mos1}, Dankowicz and Holmes \cite{Dan1}, Lacomba, Llibre and P\'erez-Chavela \cite{Lac1}, 
Garc\'{\i}a and P\'erez-Chavela \cite{Gar1}, among others. Some generalizations of this
problem include the Sitnikov problem in $\mathbb R^4$ \cite{Lac1}, 
the Sitnikov problem with three equal masses \cite{Dvo1}, 
and recently the circular 4-body Sitnikov problem in a 3+1
configuration \cite{Sou1}.
In this project we study the 4-body Sitnikov problem in a $2+2$
configuration.
In this configuration and for negative values of
relative secondaries' energy $H<0$, in every solution the infinitesimal
bodies collide. 
Therefore we consider collisions as elastic bouncing 
and we are interested in periodic solutions of this type,
after applying the regularization process to continue solutions beyond collisions.

Like other restricted problems, when the masses of infinitesimal bodies
tend to zero, the system decouples and some singular terms vanish. This is
the case we study in the present project. Instead of studying continuation of 
periodic orbits from circular to elliptic cases, we are interested in the
conditions that must satisfy the values of fixed energy in order to accept resonant
torus inside the hypersurface of constant energy. In a forthcoming work, we will study 
the transcendence conditions of the total fixed energy and its impact 
on the distribution of resonant tori.

\section{The 4-body Sitnikov problem\label{chap:FBSP}}

The Sitnikov problem is a special case of the restricted three body problem
where two massive bodies with masses $m_1=m_2=\frac{1}{2}$ 
are evolving on keplerian orbits around their center of masses, and
there is an infinitesimal body that moves on the perpendicular 
straight line which passes across the center of masses
of the massive bodies.
The massive bodies are called primaries and the infinitesimal body 
is known as the secondary.
   The Sitnikov problem consists in determining the evolution of the 
body with infinitesimal mass under the attraction of primaries
with Newtonian gravitational potential.

The 4-body Sitnikov problem in 2+2 configuration (or double Sitnikov problem for short),
is realized by the addition of one more secondary body on the perpendicular straight 
line where the first secondary evolves. In the general case, the secondaries have 
different masses $m_3=\mu$ and $m_4=\nu$ with $\mu\neq\nu$ and without loss of generality
we can assume that $\nu\leq\mu\ll\frac{1}{2}$. In this way, these bodies have no effect 
on the primaries' evolution, however with big positive masses of the secondaries
the dynamics of the system is very different to the circular classical 
Sitnikov problem. Of course, the two infinitesimal bodies interact
between them under the Newtonian gravitational force.  This is the subject of the dissertation work of the first author.

The case with positive masses $\mu>0$ and $\nu>0$ will be called the reduced problem,
while the case with null masses $\mu=\nu=0$ will be 
the restricted problem and it is called the 2+2 Sitnikov problem or 
alternatively the double Sitnikov problem. This work is related
with the study of periodic orbits of the circular problem on the energy 
constant hypersurfaces.

The potential of the reduced $2+2$ problem in the general case is
\begin{eqnarray*}
  V&=&  \frac{\mu}{\sqrt{q_3^2+1/4}} + \frac{\nu}{\sqrt{q_4^2+1/4}}  +
     \frac{\mu\nu}{q_3-q_4},
\end{eqnarray*}
the vector field is 
\begin{eqnarray*}
  \mathcal M\ddot q &=& -\frac{\partial V}{\partial q}
\end{eqnarray*}
and the Hamiltonian
function is
\begin{eqnarray}
  \underline H &=& \frac{1}{2} {\bf p}^T \mathcal M^{-1} {\bf p} - \frac{\mu}{\sqrt{q_3^2+1/4}} - \frac{\nu}{\sqrt{q_4^2+1/4}} -\frac{\mu\nu}{q_3-q_4};\label{eqn:ham02}
\end{eqnarray}
where ${\bf p}=(p_3,p_4)$ are the conjugate momenta and 
$\mathcal M=\left( 
\begin{array}{cc}
  \mu & 0\\
  0 & \nu
\end{array}
\right)$ is the matrix of masses.

In general, we can 
assign a correspondence rule to secondaries' masses in the form
$\nu=f(\mu)$ such that the $\lim_{\mu\to 0}f(\mu)=0$ and then study 
the restricted problem which will depends on $\mu$ only. In our case
we consider $\nu=c\ \mu$ with $0<c\leq1$. 

\begin{rem}
The case when $1\leq c< \infty$ is obtained by interchanging $\mu$ and $\nu$, so this 
analysis is valid for $0<c<\infty$.
\end{rem}

We obtain a new Hamiltonian function that now depends on $c$ and $\mu$ as parameters. 

\begin{eqnarray}
  H &=& \frac{1}{2} {\bf p}^T \hat{\mathcal  M}^{-1} {\bf p} - \frac{\alpha}{\sqrt{q_3^2+1/4}} - 
\frac{\beta}{\sqrt{q_4^2+1/4}} -\mu\frac{\beta}{q_3-q_4};\label{eqn:ham03}
\end{eqnarray}
where  
$\hat {\mathcal M}=\left( 
\begin{array}{cc}
  \alpha & 0\\
  0 & \beta
\end{array}
\right)$, $\alpha=\frac{1}{1+c}$ and $\beta=1-\alpha$.

At this point, we are interested in restating the problem from the symplectic point of view.
We consider the open symplectic manifold $(M,\omega)$ defined as the cotangent bundle of the 
configuration space\footnote{We say that this is the \emph{cophase space}} $\mathcal Q=(\mathbb R^2\setminus \Delta)$ 
where $\Delta=\{q_3=q_4\}$ is the set of singularities
of $H$ due to collisions.
The manifold $M$ carries the standard symplectic form\footnote{
Some authors use $\omega^\prime=\sum_i dq_i\wedge dp_i$ as the standard symplectic form. 
In the formal definition we consider the standar symplectic form as the
exterior derivative of the canonical or Liouville 1-form on $M$ defined by
$\delta=\sum_i p_idq_i$. Consequently we have $\omega=d\delta$ and $\omega^\prime=-d\delta$.
}
$\omega=\sum_{i} dp_i\wedge dq_i$. We define the Hamiltonian system associated to the double circular Sitnikov problem,
by $\mathcal H=(M,\omega,X_H)$, where 
$X_H$ is the vector field associated to the Hamiltonian function $H:M \to \mathbb R$ defined by (\ref{eqn:ham03}). 

 The {\it Hamiltonian vector field} $X_H$ in local coordinates is as follows
\begin{eqnarray*}
  \dot q_3 =  \frac{1}{\alpha}p_3,&\hspace{30pt}&  \dot p_3 = -\frac{\alpha\  q_3}{\sqrt{q_3^2+\frac{1}{4}}} - \mu\frac{\beta}{(q_3-q_4)^2},\\
  \dot q_4 =  \frac{1}{\beta}p_4,&\hspace{30pt}&
  \dot p_4 = -\frac{\beta\  q_4}{\sqrt{q_4^2+\frac{1}{4}}} + \mu\frac{\beta}{(q_3-q_4)^2}.
\end{eqnarray*}

\begin{figure}
\centering
    \epsfig{file=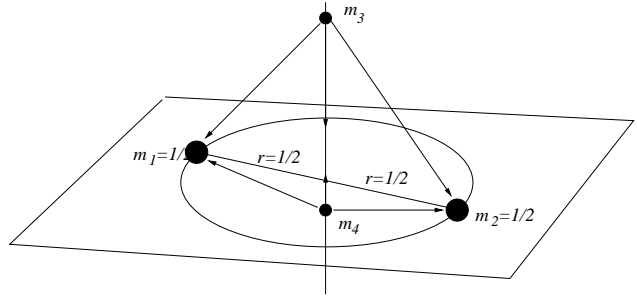,scale=0.35}
    \caption{The circular 4-body Sitnikov problem.}
    \label{fig:sit}
\end{figure}
\subsection{Regularization}
To avoid the singularity in the Hamiltonian function and in the field $X_{H}$ we 
extend analytically the equations to the hyperplane $q_3=q_4$. We perform a
symplectic regularization with the mapping $\rho:M\to M$
defined through the generating function
\begin{eqnarray*}
  W({\bf Q},{\bf p})= p_3\left( Q_4+\beta\frac{Q_1^2}{2}\right) + p_4\left( Q_4-\alpha\frac{Q_3^2}{2} \right). 
\end{eqnarray*}
Then the mapping $\rho: ({\bf Q},{\bf P})\mapsto ({\bf q},{\bf p})$ will be 
\begin{eqnarray}
  q_3 = Q_4+ \beta\frac{Q_3^2}{2},&\hspace{30pt}& p_3 = \alpha P_4 + \frac{P_3}{Q_3},\nonumber\\
  q_4 = Q_4- \alpha\frac{Q_3^2}{2},&\hspace{30pt}& p_4 =\beta P_4 - \frac{P_3}{Q_3}.\label{eqn:coord}
\end{eqnarray}

It is not difficult to show that $\rho ^*(\sum_{i}^{}dp_i\wedge dq_i)=\sum_{i}^{}dP_i\wedge dQ_i$
and, therefore $\rho \in Sp(M,\omega)$ 

Also we consider the time rescaling
\begin{eqnarray}
	\frac{dt}{d\tau} &=& \alpha\beta\ Q_3^2.\label{eqn:newtime}
\end{eqnarray}
We will obtain a new function depending
on the fixed value $h=constant$ as a parameter
in the following way: first we apply the change of coordinates 
defined by $\rho({\bf Q},{\bf P})=({\bf q},{\bf p})$ and then the Hamiltonian 
function $H({\bf q},{\bf p})=H(\rho({\bf Q},{\bf P}))$. 
Since $\rho:M\to M$ is a symplectomorphism
then the function in the new variables $\overline H({\bf Q},{\bf P})=H(\rho({\bf Q},{\bf P}))$ is again a 
Hamiltonian function.  We fix the value of the function $h=H(\rho({\bf Q},{\bf P}))$
rearrange the terms and multiply by the rescaling time to obtain $\frac{dt}{d\tau}(H\circ \rho-h)=0$.

The regularized Hamiltonian function is
\begin{eqnarray*}
  L &=& \alpha\beta\ Q_3^2\left( H-h \right)\circ \rho
\end{eqnarray*}
this Hamiltonian function depends on $\alpha,h$ as parameters and is valid only in the 
energy level $L= 0$ for each $h$ fixed.
Specifically, if ${\bf z}=(Q_3, Q_4,P_3,P_4)$
we apply the mapping $\rho$ and the Hamiltonian function 
\begin{eqnarray*}
   \left( H-h \right)\circ \rho({\bf z}) &=& 
      \frac{1}{2}\left( P_4^2 +\frac{1}{\alpha\beta}\frac{P_3^2}{Q_3^2} \right)
           - \frac{2\alpha}{\sqrt{\left(2 Q_4+\beta Q_3^2\right)^2+1}}\\
   & &  - \frac{2\beta}{\sqrt{\left(2 Q_4-\alpha Q_3^2\right)^2+1}} -     
        \mu\frac{2\beta}{Q_3^2}-h,
\end{eqnarray*}
 and after applying the time rescaling  (\ref{eqn:newtime}) we have 
\begin{eqnarray}
  L &=& \frac{1}{2} \left( \alpha\beta \ P_4^2 Q_3^2 + P_3^2\right) - 2\alpha\beta^2 \mu \label{eqn:ham:red} \\
  & & - \alpha\beta\ Q_3^2\left[ \frac{2\alpha}{\sqrt{\left(2 Q_4+\beta Q_3^2\right)^2+1}}
  + \frac{2\beta}{\sqrt{\left(2 Q_4-\alpha Q_3^2\right)^2+1}} +h \right].\nonumber
\end{eqnarray}
We write $L_h({\bf z},\mu)=L({\bf z},\mu;h)$, and the dependence on the parameter $\alpha$ 
will be dropped because in this paper we only consider 
 $c=1$ as we will see below.

We call to the triplet $\mathcal L_h=(M, \omega, X_{L_h({\bf z},\mu)})$ the regularized
system, where 
 $ X_{ L_h}$ is the regularized Hamiltonian
field. 

Although the form of the new Hamiltonian function is quite complicated, the advantage is
that this function and the Hamiltonian vector field $X_{L_h} $ are
regular on the boundary of $M$ (specifically on the set $\Delta$). Now we can obtain the limit when the mass $\mu$ goes 
to zero
\begin{eqnarray*}
  \lim_{\mu\to 0} L_h({\bf z},\mu) &=& L_h({\bf z},0),
\end{eqnarray*}
and the effect is that the term $-2\alpha\beta^2\mu$ vanish. We can reverse the process, 
and since $\alpha\beta Q_3^2$ is not identically zero, we recover the Hamiltonian function
in the original coordinates  as follows
\begin{eqnarray}
  H &=& \frac{1}{2} {\bf p}^T \hat {\mathcal M}^{-1} {\bf p} - \frac{\alpha}{\sqrt{q_3^2+1/4}} -
 \frac{\beta}{\sqrt{q_4^2+1/4}}.\label{eqn:ham01}
\end{eqnarray}

Rewriting $\underline H=H(1+c)$ and considering the momenta $p_i=\dot q_i$ for $i=3,4$,
the original Hamiltonian function for the restricted case is 
\begin{eqnarray}
  \underline H 
  &=& \left( \frac{1}{2} p_3^2  - \frac{1}{\sqrt{q_3^2+1/4}}\right) + c\left( \frac{1}{2}p_4^2  - \frac{1}{\sqrt{q_4^2+1/4}}\right).\label{eqn:ham04}
\end{eqnarray}

 As we can see,
the Hamiltonian function  (\ref{eqn:ham04}) corresponds to two uncoupled Sitnikov problems.
Figure \ref{fig:sit} shows a diagram of it.

The regularization permit us to continue analytically the solutions 
to the collision manifold $q_3=q_4$, however, if we want to study the problem 
as two uncoupled Sitnikov problems, we must give additional hypothesis in 
order to glue the solutions (contained in the same energy level) in a smooth 
way beyond the collision.
When the secondaries have different positive masses $0<\mu$, $0<\nu=c\mu$ with 
$0<c<1$,
the elastic bouncing condition
\begin{eqnarray}
	\hat v_3 - \hat v_4 = -(v_3 - v_4), \label{eqn:vel:rel}
\end{eqnarray}
and the conservation of linear momentum 
\begin{eqnarray}
	\hat p_3 + \hat p_4 = p_3 + p_4,\label{eqn:mom:tot}
\end{eqnarray}
implies
 the interchange of conjugate momenta at collision.
(Here
the terms
 $\hat p_i$ and $\hat v_i= \hat {\dot q}_i$, $i=3,4$ are respectively 
the momenta and velocities   of the bodies after collision.)
 Therefore, for certain values of the masses $\mu$ and $\nu$ there
  will be a discontinuity in the solutions of the original system
 $(M,\omega,X_H)$ for the restricted case.

Using the equations (\ref{eqn:vel:rel}) and (\ref{eqn:mom:tot}) we can see the 
behavior of the system with Hamiltonian function (\ref{eqn:ham04}) beyond collision.
Writing expression (\ref{eqn:mom:tot}) in the tangent space $T_{\bf q}\mathcal Q$ we have 
\begin{eqnarray}
	\alpha \hat v_3 + \beta \hat v_4 = \alpha v_3 + \beta v_4,\label{eqn:mom:tot2}
\end{eqnarray}
 
Let us solve for $\hat v_4$ in (\ref{eqn:vel:rel}) and substitute in  (\ref{eqn:mom:tot2}).
Now, let us solve for $\hat v_3$ in the resulting equation to get 
\begin{eqnarray}
	 \hat v_3 = v_3 -2\beta(v_3-v_4),
\end{eqnarray}
In the same way, we obtain 
\begin{eqnarray}
  \hat v_4 = v_4 + 2\alpha(v_3-v_4). \label{eqn:v3:post}
\end{eqnarray}

We write the system of two equations in matrix notation as $\hat{\bf v}=A{\bf v}$ where ${\bf v}=(v_3,v_4)^T$
and 
\begin{eqnarray}
A=
\left(
\begin{array}{cc}
   1-2\beta & 2\beta\\
   2\alpha  &  1-2\alpha
\end{array}
\right).
\end{eqnarray} 
In the cotangent space $T_{\bf q}^*\mathcal Q$ this system is written as
$\hat {\bf p}=\mathcal MA\mathcal M^{-1}{\bf p}$. Making the computations 
we obtain $\mathcal MA\mathcal M^{-1}=A^T$.
Then the condition to continue the solutions in a 
smooth way beyond collisions using the transition conditions
$\hat p_3=p_4$, $\hat p_4=p_3$, $\hat v_3 = v_4$, and
$\hat v_4 = v_3$,
is 
\begin{eqnarray}
A=A^T=
\left(
\begin{array}{cc}
   0 & 1\\
   1  &  0
\end{array}
\right). 
\end{eqnarray}
Thus, this is possible if and only if $\alpha=\beta=1/2$ or equivalently $\mu=\nu$.
We have proved the following
\begin{prop}
In the circular double Sitnikov problem if $\mu=\nu$ then the flow $\varphi_t(x)$ of the limiting
case $\mu \to 0$ can be extended to a complete flow in a natural way considering crossing beyond 
collisions instead of elastic bouncing by means of the identification
\begin{eqnarray}
   q_3-q_4 \mapsto -(q_3-q_4)
\end{eqnarray}
which extends the Hamiltonian system to the whole phase space.
\end{prop}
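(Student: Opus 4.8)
The plan is to realise the asserted complete flow as the flow of the \emph{unique analytic continuation} of $X_{\underline H}$ across $\Delta$, and then to recognise that continuation as the elastic--bouncing dynamics read through the particle--relabelling involution.

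First I would specialise to $\mu=\nu$, so that $c=1$, $\alpha=\beta=\tfrac12$ and, by (\ref{eqn:ham04}), the limiting Hamiltonian splits as $\underline H(q_3,q_4,p_3,p_4)=h(q_3,p_3)+h(q_4,p_4)$ with $h(q,p)=\tfrac12 p^{2}-(q^{2}+\tfrac14)^{-1/2}$, i.e. two decoupled copies of the classical circular Sitnikov Hamiltonian. As already observed, in the limit $\mu\to0$ the only term carrying the factor $q_3-q_4$ drops out, so $\underline H$ and $X_{\underline H}$ are real--analytic with no singularity on $\Delta=\{q_3=q_4\}$; hence they extend, uniquely by analyticity, from $M=T^{*}(\mathbb R^{2}\setminus\Delta)$ to the full cophase space $T^{*}\mathbb R^{2}\cong\mathbb R^{4}$. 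Write $\tilde X$ for this extension and $\tilde\varphi_t$ for its flow; then $\tilde\varphi_t$ coincides with the limiting flow $\varphi_t$ for as long as an orbit stays in $M$, while $M$ itself is not $\tilde\varphi_t$-invariant (a generic orbit hits $\Delta$ in finite time, which is exactly why $\varphi_t$ fails to be complete).

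Next I would check that $\tilde\varphi_t$ is complete. Since the extended system is still decoupled, the two functions $e_3=h(q_3,p_3)$ and $e_4=h(q_4,p_4)$ are independent first integrals; along any orbit $\tfrac12 p_i^{2}=e_i+(q_i^{2}+\tfrac14)^{-1/2}\le e_i+2$, so $|p_i|=|\dot q_i|$ stays bounded, while $|\dot p_i|=|q_i|(q_i^{2}+\tfrac14)^{-3/2}$ is bounded on all of $\mathbb R$. Thus $\tilde X$ is bounded along every orbit, no orbit escapes in finite time, and every maximal orbit is defined for all $t\in\mathbb R$. This already gives a complete flow extending the system to the whole phase space.

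Finally I would identify this extension with the elastic--bouncing dynamics. The configuration space $\mathbb R^{2}\setminus\Delta$ has the two components $\{q_3>q_4\}$ and $\{q_3<q_4\}$, and the map $\sigma(q_3,q_4,p_3,p_4)=(q_4,q_3,p_4,p_3)$ --- which on configurations is precisely $q_3-q_4\mapsto-(q_3-q_4)$ with $q_3+q_4$ (the center of mass) fixed --- is a symplectomorphism of $T^{*}\mathbb R^{2}$ interchanging the two components and, since $\underline H\circ\sigma=\underline H$, commuting with $\tilde\varphi_t$. An orbit of $\varphi_t$ that reaches $\Delta$ transversally (i.e. with $p_3\neq p_4$) is prolonged by $\tilde\varphi_t$ into the opposite component; applying $\sigma$ returns it to the original component, and by the computation made before the statement --- for $\alpha=\beta=\tfrac12$ the transition matrix is $A=A^{T}=\left(\begin{smallmatrix}0&1\\1&0\end{smallmatrix}\right)$, i.e. $\hat v_3=v_4$, $\hat v_4=v_3$ and correspondingly $\hat p_3=p_4$, $\hat p_4=p_3$ --- this $\sigma$-conjugate is exactly the elastic--bouncing map at collision. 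Hence crossing and elastic bouncing agree modulo the identification induced by $\sigma$, and $\tilde\varphi_t$ is the natural complete extension of the Hamiltonian system to the whole phase space. The one step that deserves care is this last identification, namely checking that the $\sigma$-conjugate of the crossing prolongation really is the bounce map recorded before the statement (and that the diagonal orbits $q_3=q_4$, $p_3=p_4$, where bouncing does nothing, are handled consistently); the analytic extension and the completeness estimate are routine.
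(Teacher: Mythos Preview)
Your proposal is correct and follows essentially the same route as the paper: the key step in both is the computation, carried out in the text immediately before the proposition, that the elastic--bouncing transition matrix satisfies $A=A^{T}=\left(\begin{smallmatrix}0&1\\1&0\end{smallmatrix}\right)$ precisely when $\alpha=\beta=\tfrac12$, so that bouncing coincides with the particle swap $q_3-q_4\mapsto-(q_3-q_4)$. You add two ingredients the paper leaves implicit --- the observation that the decoupled limit Hamiltonian is already analytic across $\Delta$ (so no regularisation is needed in the limit), and an explicit a priori bound giving completeness of the extended flow --- but the core identification is the same one the paper records just before stating the proposition.
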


\begin{figure}
\centering
 \epsfig{file=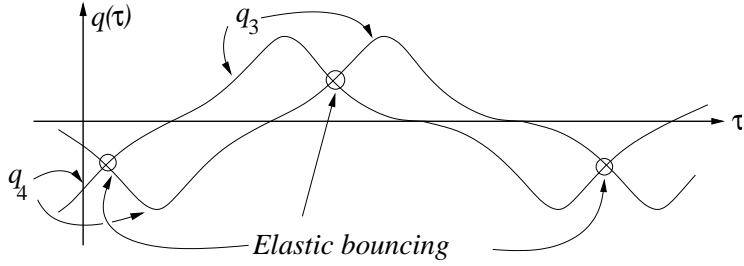,scale=0.55}
\caption{Solutions beyond collisions}
\end{figure}

On the other hand if we start with the conditions $\hat p_4 = p_3$ and $\hat p_3=p_4$,
and expanding $\hat{\bf p}=A^T{\bf p}$ we arrive to
\begin{eqnarray*}
	p_4 &=& (1-2\beta)p_3 + 2\alpha p_4,\\
	p_3 &=& 2\beta p_3 + (1-2 \alpha) p_4,
\end{eqnarray*}
from where
\begin{eqnarray*}
	\left(1-2\beta\right)p_3 + \left(2\alpha-1 \right)p_4 =0,
\end{eqnarray*}
and finally, since $1-2\beta=2\alpha-1$ we get
\begin{eqnarray}
	(2\alpha-1)(p_3+p_4)=0. \label{eqn:cond:pi}
\end{eqnarray}
This expression implies that if $\alpha\neq \frac{1}{2}$ (i.e. if $\mu\neq\nu$) 
then the solution can be continued by changing signs 
$\hat p_3=-p_3$ and $\hat p_4=-p_4$
only if $p_3+p_4=0$. 
This situation is equivalent to reversing the solution after collision
and this is the classical conception of elastic bouncing.
In this case we will be 
interested in solutions that cross the $2$-dimensional plane 
\begin{eqnarray*}
 \mathcal P=\{q_3=q_4\}\cap\{  p_3=-p_4\},
\end{eqnarray*}
and this will be studied in a forthcoming paper.

For convenience, we will analyze the problem in the original equations 
and this implies to use the expression (\ref{eqn:ham04}) with $c=1$.

\section{Action-angle coordinates and analytical solutions.}
Following Hoffer and Zehnder \cite{Hof1}, we observe that if $(M,\omega)$ is an exact symplectic manifold of dimension $2n$, there exists 
a $1$-form such that $\omega=d\lambda$. For every symplectomorphism 
$\varphi\in Sp(M)$, the
$1$-form $\lambda-\varphi^*\lambda$ is closed and by the Poincar\'e lemma, 
locally there exists $f:M\to\mathbb R$ such that $df=\lambda-\varphi^*\lambda$.

Integrating over a simple closed curve $\gamma$ we have $\int_\gamma \lambda
=\int_\gamma \varphi^*\lambda$, and finally we obtain
\begin{eqnarray*}
	\int_\gamma \lambda=\int_{\varphi(\gamma)} \lambda.
\end{eqnarray*}
We define the action on a simple closed curve $\gamma$ as 
$J(\gamma)=\int_\gamma \lambda$. An 
immediate consequence of the above computations is that the action is 
invariant under symplectomorphisms.

This invariant property on simple closed curves permit us to construct a symplectomorphism 
for every periodic integrable Hamiltonian system $(M,\omega,H)$
that only depends on the 
values of the momentum map (although the term ``action-angle'' coordinates is
generalized to non periodic Hamiltonian systems on non compact symplectic manifolds). 

Let $H=h=h_3+h_4$ be the separable Hamiltonian function associated to the
circular double Sitnikov problem. We can consider $h_3<0$ and $h_4<0$
and two closed orbits $\gamma_{h_3},\gamma_{h_4}$ associated to the relative
energies $h_3$ and $h_4$ respectively. Then, there exist a symplectic 
change of coordinates $\phi:M\to M$ 
where the transformed Hamiltonian function only depends on the action of the
(simple closed) integral curves of each fundamental field. These coordinates are
called {\it action-angle} coordinates, and are defined by 
\begin{figure}[ht]
  \centering
    \includegraphics[scale=.45]{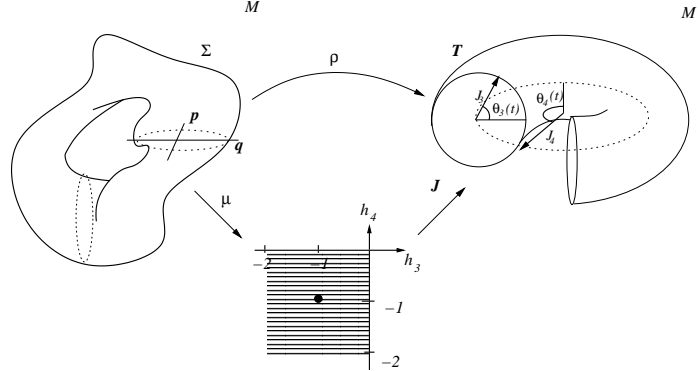}
    \caption{Action-angle coordinates.}
    \label{fig:act}
\end{figure}

\begin{eqnarray*}
	J( \gamma_{h_i})= \frac{1}{2\pi}\int_{\gamma_{h_i}} \lambda,\hspace{30pt}
	\theta(h_i) = \frac{1}{\Omega(h_i)}t + \theta_0,
\end{eqnarray*}
where $\Omega(h_i)=\frac{\partial J}{\partial h_i}$. We have the following.

\begin{thm}
  The action-angle coordinates for the double Sitnikov problem takes the form
  \begin{eqnarray}
    J(h_i) &=& \frac{\sqrt{2}}{\pi}\left( 2 E(k_i) - K(k_i) - \Pi(2k_i^2,k_i) \right), \label{eqn:Jh1}\\
    \theta_i(t;h_i) &=& \frac{1}{\Omega_i} t(\nu,k) + \theta_{0,i},
  \end{eqnarray}
  where $\Omega_i=\frac{\sqrt{2}}{4\pi(1-2 k_i^2)}(2E(k_i)-K(k_i)+\Pi(2k_i^2,k_i))$
  is the return time of the secondaries, $k_i=\frac{\sqrt{2+h_i}}{2}$ and
   $\theta_{0,i},i=3,4,$ are constants determined by the initial conditions.
  \label{teo:action}
\end{thm}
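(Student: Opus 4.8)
The plan is to use the splitting established above: for $c=1$ the Hamiltonian (\ref{eqn:ham04}) is $\underline H = H_3 + H_4$ with $H_i(q_i,p_i) = \tfrac12 p_i^2 - (q_i^2+\tfrac14)^{-1/2}$, so it suffices to construct action--angle variables on each one--degree--of--freedom factor and take the product, the full momentum map being $(H_3,H_4)$. Fix $h_i\in(-2,0)$, the range for which the motion is bounded, since the potential $-(q_i^2+\tfrac14)^{-1/2}$ attains its minimum $-2$ at the equilibrium $q_i=0$. Then $\{H_i=h_i\}$ is a single oval, symmetric in both axes, with turning points $\pm q_{i,\max}$ determined by $(q_{i,\max}^2+\tfrac14)^{-1/2}=-h_i$, and $p_i=\pm\sqrt{2\bigl(h_i+(q_i^2+\tfrac14)^{-1/2}\bigr)}$ on it; hence, by the definition $J(\gamma_{h_i})=\tfrac1{2\pi}\oint_{\gamma_{h_i}}\lambda$,
\[
  J(h_i)=\frac{1}{2\pi}\oint p_i\,dq_i=\frac{2}{\pi}\int_0^{q_{i,\max}}\sqrt{2\bigl(h_i+(q_i^2+\tfrac14)^{-1/2}\bigr)}\,dq_i .
\]

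Next I would reduce this to Legendre normal form. Setting $s=\sqrt{q_i^2+\tfrac14}$ and $a=-1/h_i>\tfrac12$ (so that $dq_i = s\,ds/\sqrt{s^2-\tfrac14}$ and $h_i+1/s=(a-s)/(as)$) gives
\[
  J(h_i)=\frac{2\sqrt2}{\pi\sqrt a}\int_{1/2}^{a}\frac{(as-s^2)\,ds}{\sqrt{\,s\,(a-s)\,(s-\tfrac12)\,(s+\tfrac12)\,}} .
\]
The radicand is a quartic with the four real roots $-\tfrac12<0<\tfrac12<a$, and the numerator is quadratic, so this is a combination of complete elliptic integrals of the first, second and third kinds. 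A homography carrying $(r_1,r_2,r_3,r_4)=(-\tfrac12,0,\tfrac12,a)$ to canonical position maps the relevant range $[\tfrac12,a]$ onto $[0,1]$; the modulus is the cross--ratio $k_i^2=\frac{(r_2-r_1)(r_4-r_3)}{(r_3-r_1)(r_4-r_2)}=\frac{2a-1}{4a}=\frac{2+h_i}{4}$, i.e. $k_i=\tfrac12\sqrt{2+h_i}$, as claimed. Applying the standard recursions that express $\int s^2/\sqrt Q\,ds$ and $\int s/\sqrt Q\,ds$ through $K$, $E$ and the third--kind integral with characteristic $n=2k_i^2$, and collecting terms, yields (\ref{eqn:Jh1}).

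For the return time I would differentiate $J$: with $dk_i/dh_i=1/(8k_i)$, $dE/dk=(E-K)/k$, $dK/dk=E/(k(1-k^2))-K/k$, and the $n$- and $k$-derivatives of $\Pi(2k^2,k)$, the factor $1-2k_i^2\ (=\tfrac12|h_i|)$ appears in the denominator and one is left with $\Omega_i=\partial J/\partial h_i=\frac{\sqrt2}{4\pi(1-2k_i^2)}\bigl(2E(k_i)-K(k_i)+\Pi(2k_i^2,k_i)\bigr)$; computing the period directly from $\oint dq_i/\dot q_i=\oint dq_i/p_i$ and reducing it by the same substitution gives an independent check. Finally, for the angle variable conjugate to $J$ I would integrate $dt=dq_i/p_i$ from $q_i=0$; the same change of variables turns this into an incomplete elliptic integral $t=t(\nu,k_i)$ with amplitude $\nu=\nu(q_i)$, and $\theta_i=\Omega_i^{-1}t+\theta_{0,i}$ is then affine in $t$ by construction, which is the second formula.

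The main obstacle is the bookkeeping in the reduction: tracking all multiplicative constants, choosing the branch of the homography so the integration interval lands correctly inside $[0,1]$, and above all pinning down the characteristic $n=2k_i^2$ of the third--kind integral and the precise sign combinations $2E-K-\Pi$ in $J$ versus $2E-K+\Pi$ in $\Omega_i$. The degenerate limits provide confidence: as $h_i\to0^-$ one has $k_i\to\tfrac1{\sqrt2}$, $2k_i^2\to1$, and both $J$ and $\Omega_i$ diverge, matching the onset of escape; as $h_i\to-2^+$ one has $k_i\to0$, $2E-K-\Pi\to\pi-\pi=0$ so $J\to0$, and $\Omega_i\to\tfrac{\sqrt2}{4}=\tfrac1{2\pi}\cdot\tfrac{\pi}{\sqrt2}$, the small--oscillation value at the equilibrium ($V''(0)=8$).
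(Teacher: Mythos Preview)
Your outline is sound and would work, but it takes a different route from the paper at the key step.  The paper does not pass through a quartic radicand and a homography; instead it introduces the tailored substitution
\[
  -\frac{1}{\sqrt{q_i^2+1/4}} = (2+h_i)\,s^2 - 2,
\]
which lands the action integral directly in Legendre form
\[
  J_i(h_i)=\frac{4\sqrt{2}}{\pi}\,k_i^2\int_0^1 \frac{\sqrt{1-s^2}\,ds}{\sqrt{1-k_i^2 s^2}\,(1-2k_i^2 s^2)^2},
\]
with $k_i=\tfrac12\sqrt{2+h_i}$ appearing immediately.  One integration by parts and two elementary splittings then separate this into the $E$, $F$ and $\Pi$ pieces, and the sign pattern $2E-K-\Pi$ falls out of that algebra rather than from recursion formulas.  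Your approach---substitute $s=\sqrt{q_i^2+1/4}$ to obtain the quartic with roots $-\tfrac12,0,\tfrac12,a$, then read off $k_i^2$ from the cross--ratio and reduce $\int s^j/\sqrt Q\,ds$ by the standard recursions---is more systematic and makes the modulus appear without guesswork, but defers all the bookkeeping you flag (the characteristic $n=2k_i^2$, the sign in front of $\Pi$) to those recursions, which you have not actually carried out.  The paper's substitution is shorter once you have it; yours is the method one would use to \emph{discover} it.  For the period, both proofs simply differentiate $J$; the paper states the result without displaying the derivative identities you list.
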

\begin{proof}
The action is defined as the integral $J_i (h_i)=\frac{1}{2\pi} \oint p_i\ dq_i$
on a complete period. Since each $h_i$ is symmetric in $p_i$ and $q_i$, we can integrate over a quarter of period and multiply
it by four. 
\begin{eqnarray}
  J_i(h_i)&=& \frac{2\sqrt{2}}{\pi}\oint_0^{q_{max}}{ \sqrt{ h_i+\frac{1}{\sqrt{q_i^2+\frac{1}{4}}}}}\ dq_i,
	\label{eqn:int1}
\end{eqnarray}
where $q_{max}$ is obtained when $p_i=0$, then 
$q_{max}= \sqrt{\frac{1}{(-h_i)^2}-\frac{1}{4}}.$

We construct a suitable change of variables $(q,p)\mapsto(y,s)$ that 
``normalizes'' the integral, such that the following conditions hold: 

a) the Hamiltonian function takes the form $\hat h=y^2/2 + a\ s^2$, 

b) for $q=0$ we require that $s=0$,

c) for $q=q_{max}$ we require that $s=1$ and $p=0$.

  The suitable change of variables 
  \begin{eqnarray}
    -\frac{1}{\sqrt{q_i^2+1/4}} &=& (2+h_i)s_i^2-2 ,\label{eqn:chan1} 
  \end{eqnarray}
  transforms the integrand of (\ref{eqn:int1}) in $\sqrt{h_i+\frac{1}{\sqrt{q_i^2+1/4}}}=\sqrt{2+h_i}\sqrt{1-s^2}$. We write $k_i=\sqrt{2+h_i}/2$ and solve (\ref{eqn:chan1}) for $q_i$,
  then we compute $dq_i$ to obtain
\begin{eqnarray}
\sqrt{h_i+\frac{1}{\sqrt{q_i^2+1/4}}}dq_i=2 k_i^2 \frac{\sqrt{1-s^2} }{\sqrt{1-k_i^2 s^2}(1-2 k_i^2 s^2)^2}ds.
\end{eqnarray}
The integral (\ref{eqn:int1}) takes the form 
\begin{eqnarray}
	J_i(h_i)&=& \frac{4\sqrt{2}}{\pi} k_i^2 \oint_0^1 \frac{\sqrt{1-s^2}ds}{\sqrt{1-k_i^2s^2}\left( 1-2k_i^2s^2 \right)^2}.\label{eqn:Jh}
\end{eqnarray}
This is a general complete elliptic integral. It is possible to write any elliptic integral
in terms of algebraic rational functions of the independent variable, and the 
elliptic integrals of first, second and third
kinds \cite{Han1}. First we integrate by parts

\begin{eqnarray*}
	\int \frac{\sqrt{1-s^2}ds}{\sqrt{1-k_i^2s^2}\left( 1-2k_i^2s^2 \right)^2}&=& 
	 \frac{s\sqrt{1-s^2}\sqrt{1- k^2 s^2}}{1 -2 k^2 s^2}  
	  +\int \frac{s^2\sqrt{1-k^2 s^2}\ ds}{\sqrt{1-s^2}(1-2k^2 s^2)}. 
\end{eqnarray*}
We rewrite the last integral in the form 
\begin{eqnarray*}
	  \frac{1}{2k^2}\int \frac{2 k^2 s^2\sqrt{1-k^2 s^2}ds}{\sqrt{1-s^2}(1-2k^2 s^2)} &=&
            -\frac{1}{2k^2}E(s,k_i) + \frac{1}{2k^2}\int\frac{\sqrt{1-k^2 s^2}ds}{\sqrt{1-s^2}(1-2k^2 s^2)},
\end{eqnarray*}
and again the last integral will be rewritten as
\begin{eqnarray*}
    \frac{1}{4k^2}\int\frac{[1+ (1-2k^2 s^2)]ds}{\sqrt{1-k^2 s^2}\sqrt{1-s^2}(1-2k^2 s^2)} &=&
              \frac{1}{4k^2}F(s_i,k_i) + \frac{1}{4k^2}\Pi(2k^2,s_i,k_i).  
\end{eqnarray*}
Putting everything together we get 
\begin{eqnarray*}
	 \int \frac{\sqrt{1-s^2}ds}{\sqrt{1-k_i^2s^2}\left( 1-2k_i^2s^2 \right)^2}ds&=& 
	 \frac{s\sqrt{1-s^2}\sqrt{1- k^2 s^2}}{1 -2 k^2 s^2} - \\
	 & & \frac{1}{4 k^2} \left( 2E(s,k_i)- \Pi(2k_i^2,s,k_i) - F(s,k_i) \right).
\end{eqnarray*}
Finally, evaluating on the integration limits, we obtain (\ref{eqn:Jh1}).

Now, the values of the period for each one of the secondaries is obtained
 in a straightforward way just by calculating the derivatives:
\begin{eqnarray}
   \frac{T(h_i)}{2\pi}=\frac{\sqrt{2}}{\pi} \frac{1}{4(1-2k^2)} \left( 2E(k_i)-K(k_i)+\Pi(2k_i^2,k_i)\right).
\end{eqnarray}
\end{proof}

Note that the solution of the angle coordinates uses the time $t=t(\nu_i,k_i)$
that is not computed yet. This variable is obtained directly from the solution 
of the classical Sitnikov problem exposed by Belbruno, Oll\'e and Llibre in 
\cite{Bel1}.


\begin{thm}
	The solutions for the circular double Sitnikov problem can be written as

\begin{eqnarray*}
	\sigma(t)= 
	\left( \frac{k_3\ s(\nu_3)\ d(\nu_3)}{1-2 k_3^2\ s^2(\nu_3)},2\sqrt{2}k_3\ c(\nu_3),
	\frac{k_4\ s(\nu_4)\ d(\nu_4)}{1-2 k_4^2\ s^2(\nu_4)},2\sqrt{2}k_4\ c(\nu_4) \right),
\end{eqnarray*}
where $\nu_i$ are functions of $t$ obtained inverting the function
\begin{eqnarray}
	t&=& \int \frac{\sqrt{2}}{4(1-2k^2sn(\nu_i)^2)^2}d\nu_i, \label{eqn:tint}
\end{eqnarray}
and $s(\nu_i)\equiv sn(\nu_i(t),k_i)$,  $c(\nu_i)\equiv cn(\nu_i(t),k_i)$,  
$d(\nu_i)\equiv dn(\nu_i(t),k_i)$ are the sine, cosine, and delta amplitude 
Jacobi elliptic functions, and $k_i=\frac{\sqrt{2+h_i}}{2}$ for $i=3,4$.\label{teo:sol}
\end{thm}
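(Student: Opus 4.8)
The plan is to exploit the decoupling established in Section~\ref{chap:FBSP}: with $c=1$ the Hamiltonian (\ref{eqn:ham04}) is the sum $\underline H = h_3 + h_4$ of two independent one-degree-of-freedom Hamiltonians $h_i = \tfrac12 p_i^2 - (q_i^2+\tfrac14)^{-1/2}$, each of which governs a copy of the classical circular Sitnikov problem. Hence it suffices to integrate a single block with fixed energy $h_i<0$ and then write $\sigma(t)$ as the juxtaposition of the two solutions; the single-block solution is the one of Belbruno, Oll\'e and Llibre \cite{Bel1}, which I would reconstruct in the normalization already prepared in the proof of Theorem~\ref{teo:action}.

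First I would insert the substitution (\ref{eqn:chan1}), $-(q_i^2+\tfrac14)^{-1/2} = (2+h_i)s_i^2-2$, into the energy relation $\tfrac12\dot q_i^2 = h_i + (q_i^2+\tfrac14)^{-1/2}$, which collapses to $\tfrac12\dot q_i^2 = (2+h_i)(1-s_i^2) = 4k_i^2(1-s_i^2)$ with $k_i = \sqrt{2+h_i}/2$; thus $p_i = \dot q_i = \pm 2\sqrt2\,k_i\sqrt{1-s_i^2}$. Solving (\ref{eqn:chan1}) for $q_i^2$ and using the factorization $1-(1-2k_i^2 s_i^2)^2 = 4k_i^2 s_i^2(1-k_i^2 s_i^2)$ gives $q_i = \pm k_i s_i\sqrt{1-k_i^2 s_i^2}\,/(1-2k_i^2 s_i^2)$; the hypothesis $h_i<0$ is exactly $k_i^2<\tfrac12$, which keeps the denominator positive and the motion away from collision.

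Next I would fix the time law. From $dt = dq_i/\dot q_i$, inserting the expression for $dq_i$ in terms of $ds_i$ already obtained in the proof of Theorem~\ref{teo:action} (namely $\sqrt{h_i+(q_i^2+\tfrac14)^{-1/2}}\,dq_i = 2k_i^2\sqrt{1-s_i^2}\,ds_i\,/[\sqrt{1-k_i^2 s_i^2}(1-2k_i^2 s_i^2)^2]$, together with $\sqrt{h_i+(q_i^2+\tfrac14)^{-1/2}} = 2k_i\sqrt{1-s_i^2}$ on the orbit) a factor $\sqrt{1-s_i^2}$ cancels and leaves $dt = ds_i\,/[\,2\sqrt2\,\sqrt{1-s_i^2}\sqrt{1-k_i^2 s_i^2}(1-2k_i^2 s_i^2)^2\,]$. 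Now set $s_i = sn(\nu_i,k_i)$: since $ds_i = cn(\nu_i)dn(\nu_i)\,d\nu_i = \sqrt{1-s_i^2}\sqrt{1-k_i^2 s_i^2}\,d\nu_i$, both square roots cancel and one is left precisely with (\ref{eqn:tint}), $t = \int \tfrac{\sqrt2}{4}(1-2k_i^2 sn^2(\nu_i))^{-2}\,d\nu_i$. Substituting the same $s_i = sn(\nu_i)$ into the formulas of the previous step and using $dn^2 = 1-k_i^2 sn^2$ and $cn^2 = 1-sn^2$ turns $q_i$ and $p_i$ into $k_i\,s(\nu_i)d(\nu_i)/(1-2k_i^2 s^2(\nu_i))$ and $2\sqrt2\,k_i\,c(\nu_i)$, i.e. the stated components of $\sigma(t)$.

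The only real difficulty is bookkeeping: choosing the branch of $\nu_i$ and resolving the sign ambiguities consistently over a full real period $4K(k_i)$ of $sn$, so that the single analytic formula $\sigma(t)$ traverses the whole periodic orbit — crossing $q_i = 0$, reaching $q_i = \pm q_{max}$, reversing $\dot q_i$ — and reproduces the quarter-period structure used in Theorem~\ref{teo:action}; this is exactly the verification carried out for the classical problem in \cite{Bel1}. One should also record that (\ref{eqn:tint}) is not invertible in elementary terms, so $\nu_i(t)$ must be left implicit, which is precisely what the statement asserts.
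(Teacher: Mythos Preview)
Your proposal is correct and follows essentially the same route as the paper: both reduce to the single one-degree-of-freedom block, apply the substitution (\ref{eqn:chan1}) to obtain $dt = ds_i/[2\sqrt{2}\sqrt{1-s_i^2}\sqrt{1-k_i^2 s_i^2}(1-2k_i^2 s_i^2)^2]$, and then introduce $\nu_i$ via $s_i = sn(\nu_i,k_i)$ to reach (\ref{eqn:tint}) and the stated formulas for $q_i,p_i$. The only cosmetic difference is that the paper obtains $p_i$ by differentiating $q_i(\nu)$ through the chain rule $p_i = \frac{dq_i}{d\nu}\frac{d\nu}{dt}$, whereas you read it off directly from the energy relation and $cn^2 = 1-sn^2$; the two are equivalent.
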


\begin{proof}
Since the solution of Newtonian Hamiltonian systems with
one degree of freedom that have the form $H=p^2/2-V(q)$ is
\begin{eqnarray}
 t+t_0=\int_{q_0}^q \frac{dq}{\sqrt{2(h+V(q))}}
\end{eqnarray}
we can use the change of variables 
(\ref{eqn:chan1}) in the differential 
\begin{eqnarray}
    dt= \frac{dq}{\sqrt{h+\frac{1}{\sqrt{q^2+1/4}}}}= \frac{1}{2\sqrt{2}} \frac{ds}{\sqrt{1-s^2}\sqrt{1-k^2s^2}(1-2k^2s^2)^2},
\end{eqnarray}
and rescale the time for $\frac{dt}{d\nu}d\nu$ to obtain 
\begin{eqnarray}
   \frac{dt}{d\nu} = \frac{1}{2\sqrt{2}(1-2k^2s^2)^2}\hspace{20pt} {\rm and} \hspace{20pt}
    d\nu = \frac{ds}{\sqrt{(1-s^2)(1-k^2s^2)}}.
\end{eqnarray}
We solve first the second integral directly because this is an elliptic
integral of the first kind. The solution is $s=sn(\nu,k)$ and this solution is 
substituted in the first integral
to obtain the relation between the old and new times
\begin{eqnarray}
   t = \int \frac{d\nu}{2\sqrt{2}(1-2k^2sn(\nu,k)^2)^2}.
\end{eqnarray}
Finally, solving for $q_i$ in (\ref{eqn:chan1}) and substituting the solution
for $s=sn(\nu,k)$ we obtain 
\begin{eqnarray}
   q_i &=& k\frac{s\sqrt{1-k^2s^2}}{1-2k^2s^2} =  k \frac{sn(\nu,k)dn(\nu,k)}{1-2k^2sn(\nu,k)^2}.
\end{eqnarray}
In order to obtain the conjugate momentum, we differentiate 
$p_i=\frac{dq_i}{dt}=\frac{d q_i}{d\nu}\frac{d\nu}{dt}$ to get
\begin{eqnarray}
p_i=\frac{d q_i}{d\nu}\frac{d\nu}{dt}=2\sqrt{2}kcn(\nu,k).
\end{eqnarray}
\end{proof}


It is possible to integrate the expression 
(\ref{eqn:tint}) with elliptic functions
and elliptic integrals to obtain  
\begin{eqnarray}
	t&=& \frac{\sqrt{2}}{8(1-2k^2)}\left[2 E(\nu) - \nu +\Pi(\nu,2k^2)
	-4k^2\frac{sn(\nu)cn(\nu)dn(\nu)}{1-2k^2sn(\nu)^2}\right] + C,\hspace{10pt}\label{eqn:timet}
\end{eqnarray}
where $C$ is an arbitrary constant of integration. In \cite{Cor1} the reader will 
find a nice and complete study of this function. 

From equation (\ref{eqn:timet}) the action-angle coordinates are completely
described as

  \begin{eqnarray*}
    J(h_i) &=& \frac{\sqrt{2}}{\pi}\left( 2 E(k_i) - K(k_i) - \Pi(2k_i^2,k_i) \right), \\
    \theta_i(\nu) &=& \frac{\pi}{2} \left(\frac{ 
   2 E(\nu) - \nu +\Pi(\nu,2k^2)
	-4k^2\frac{sn(\nu)cn(\nu)dn(\nu)}{1-2k^2sn(\nu)^2}
    }{(2E(k_i)-K(k_i)+\Pi(2k_i^2,k_i))}\right)  + \theta_{0,i}.
  \end{eqnarray*}

  From these expressions it is possible to deduce when $\nu_i=K(k_i)$
  that $\theta_i(K)=\frac{\pi}{2}+\theta_{0,i}$, therefore $\nu_i=4K(k_i)$
  implies $\theta_i(4K)=2\pi+\theta_{0,i}$. Consequently, the
  solutions of the Hamiltonian subsystems generated by $H_i=h_i$ 
  have period $\nu=4K(k_i)$ in the $\nu$ variable and 
  $T(h_i)=\frac{2\pi}{\Omega_i}$ in the $t$ variable. This fact
  will be usefull in the analysis of periodic orbits for the
  circular double Sitnikov problem. 

Just one comment before to passing to the study of the level sets and
periodic orbits of our problem. As the reader can observe, 
it is usually difficult to find the inverse 
transformation $\phi^{-1}$ to have the new Hamiltonian function
explicitly $\hat H(J,\theta)=(H\circ \phi^{-1})(J,\theta)$, however
since $\phi:({\bf q},{\bf p})\mapsto (J,\theta)$ is a symplectomorphism,
in particular $\det\left( \frac{\partial \phi}{\partial p,q} \right)=1$
and applying the inverse function  theorem, locally $\phi^{-1}$ always exists.

\section{Hyper-surfaces of fixed energy}
In this section we describe the topology of the level sets for the 
Hamiltonian function (\ref{eqn:ham04}) with $c=1$, in terms 
of the momentum map 
and its image. Since for every point $x\in {\rm Img}(\mu)$ its fiber $\mu^{-1}(x)$ is 
a Lagrangian submanifold, we can decompose ${\rm Img}(\mu)$ in smooth 
subsets with boundary to
construct the foliations  by hypersurfaces of constant energy.

\subsection{Completely integrable Hamiltonian systems}

A completely integrable Hamiltonian system 
is a Hamiltonian system $(M,\omega,X_H)$ and a set $F=\{F_i\}_{i=1}^n$ of 
first integrals for $H=F_1$ which are functionally independent and 
they are in \emph{involution} (i.e., $\{ F_i, F_j \} = 0$ where $\{f,g\}=\omega(X_f,X_g)$ denotes
the Poisson bracket). In this case we call the set $(M,\omega,X_H,F)$ 
a completely integrable system in de sense of Liouville.

%

In this context, the circular double Sitnikov problem is a completely 
integrable Hamiltonian system. Every first integral $F_i$, $2\le i \le n$
generates an one-parameter family of symplectomorphisms by means of
the exponential map. This one-parameter family can be realized as a
Lie group $G$ acting on the manifold $M$.

%
%
%

We say that the action is symplectic if for every $g\in G$,
we have that the flow $\varphi_g$ is a symplectomorphism.
     Additionally, we say that the action is a Hamiltonian 
     action if each of the fundamental fields is a Hamiltonian vector field. 
More specifically,  the action $\varphi_t$ is Hamiltonian if there exists 
a map $\mu:M\to\mathfrak g^*$, from the symplectic manifold to the dual of 
the Lie algebra $\mathfrak g=Lie(G)$ such that for every 
$X\in \mathfrak g$, the component
$ \mu^X(p):= \langle\mu(p),X\rangle $
of $\mu$ along $X$  
and  for the fundamental vector field $X^\sharp$  on $M$ generated by the 1-parameter
subgroup $G^0=\{ exp(tX)| t\in\mathbb R\}\subseteq G$, the relation 
\begin{eqnarray*}
   d\mu^X=i_{X^\sharp}\omega
\end{eqnarray*}
holds, i.e., the function $\mu^X$ is a Hamiltonian function for $X^\sharp$
and $\mu\circ \varphi_g = {\rm Ad}_g^*\circ \mu$, for all $g\in G$.

     Each fundamental field of an integrable Hamiltonian system is generated 
     by one first integral $F_i$,  such that $\{F_i,F_j\}=0$ for $1\leq i\leq k$. 
    The application 
     \begin{eqnarray*}
       \mu &=& (H=F_1,\dots,F_k):M\to \mathfrak g^*\cong \mathbb R^k
     \end{eqnarray*}
     is called the {\it momentum map} and is defined from the symplectic
     manifold to the dual of the Lie algebra associated to the Lie group that acts
     on $M$. If $k<n$ the system is partially integrable, however if $k=n$
     the system is Liouville integrable or completely integrable.

We consider the momentum map $\mu=(H_3,H_4):(M,\omega)\to \mathbb R^2$
defined by 
\begin{eqnarray}
      \mu &=& \left( \frac{1}{2}p_3^2- \frac{1}{\sqrt{q_3^2+r(t)^2}}
      , \frac{1}{2}p_4^2- \frac{1}{\sqrt{q_4^2+r(t)^2}}\right).\label{eqn:mom}
\end{eqnarray}
\begin{figure}[ht]
  \centering
    \includegraphics[scale=0.8]{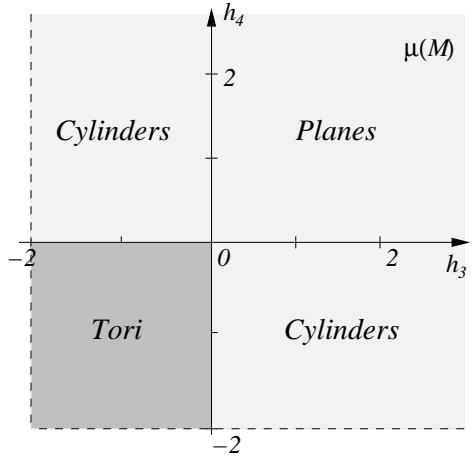}
    \caption{Values of the momentum map}
    \label{fig:fol1}
\end{figure}
     In our case $G$ has three possibilities 
\begin{itemize}
 \item[$\imath$)] $G=\mathbb S^1\times\mathbb S^1$ if $\mu({\bf q},{\bf p})$ belongs to the third quadrant,
 \item[$\imath\imath$)] $G=\mathbb S^1\times\mathbb R$ if $\mu({\bf q},{\bf p})$ belongs to the second or fourth quadrant,
 \item[$\imath\imath\imath$)] $G=\mathbb R\times\mathbb R$ if $\mu({\bf q},{\bf p})$ belongs to the first quadrant.
\end{itemize}
In all three cases $\mathfrak g^*\cong \mathbb R^2$ is obtained.

It is well-known that the inverse image $\mu^{-1}(x)$
of each $x\in {\rm Img}(\mu)$ is a Lagrangian sub-manifold of $(M,\omega)$.
If it is
a compact set, it will be isomorphic to a torus. In other cases, it would be
isomorphic to cylinders or planes
according with the region where the point $x$ lies (see Figure \ref{fig:fol1}).

 In what follows we prove a result related to the image of hypersurfaces of constant
energy of completely and separable integrable Hamiltonian systems under its momentum map.

\begin{lem}
	Let $(M,\omega)$ be an exact symplectic manifold of dimension
	$2n$, and $\mathcal H=(M,\omega,X_H)$ be a Hamiltonian system over 
	$(M,\omega)$ with Hamiltonian 
	vector field defined 
	by $i_{X_H}\omega=dH$. Suppose that there exists a symplectomorphism
	$\rho:M\to M$ such that the new Hamiltonian function $F=H\circ \rho$
	is separable. Then there exists a Lagrangian fibration $\pi:M\to\mathbb R^n$
	such that the hypersurfaces of constant energy $H$ map to hyperplanes 
	which are perpendicular to the vector $\mathds 1=(1,1,\dots,1)
	\in\mathbb R^n$.
\end{lem}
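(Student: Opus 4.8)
The plan is to extract, from the separability of $F=H\circ\rho$, an explicit collection of Poisson-commuting first integrals, to assemble them into a momentum map whose fibres are Lagrangian, and then to transport this whole structure back through the symplectomorphism $\rho$.

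First I would fix global canonical coordinates $({\bf q},{\bf p})=(q_1,\dots,q_n,p_1,\dots,p_n)$ on $M$ (available here since $M$ is a cotangent bundle) with respect to which separability is meant, so that
\begin{eqnarray*}
  F({\bf q},{\bf p}) &=& \sum_{i=1}^n F_i(q_i,p_i),
\end{eqnarray*}
each $F_i$ depending only on its own conjugate pair. A one-line computation with the Poisson bracket shows $\{F_i,F_j\}=0$ for all $i,j$, since $F_i$ and $F_j$ involve disjoint sets of variables when $i\neq j$; in particular each $F_i$ is a first integral of $X_F$ and $\{F_i,F\}=0$. Hence $\mu_F:=(F_1,\dots,F_n):M\to\mathbb R^n$ is the momentum map of a Hamiltonian action of an $n$-dimensional abelian group, and on the open set where $dF_1\wedge\cdots\wedge dF_n\neq 0$ its fibres $\mu_F^{-1}(x)$ are isotropic of dimension $n=\frac{1}{2}\dim M$, hence Lagrangian; thus $\mu_F$ is a Lagrangian fibration.

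Next I would push this structure forward by $\rho$. Define $\pi:=\mu_F\circ\rho^{-1}:M\to\mathbb R^n$. Because $\rho$ (hence $\rho^{-1}$) is a symplectomorphism, the components $F_i\circ\rho^{-1}$ still Poisson-commute and $\pi^{-1}(x)=\rho\bigl(\mu_F^{-1}(x)\bigr)$ is the image of a Lagrangian submanifold under a symplectomorphism, so $\pi$ is again a Lagrangian fibration. Now I compute the image of an energy surface. Since $F=H\circ\rho$, for each $h$ we have $\rho^{-1}\bigl(\{H=h\}\bigr)=\{F=h\}$, and because $F=\sum_i F_i$,
\begin{eqnarray*}
  \mu_F\bigl(\{F=h\}\bigr) &=& \bigl\{\,x=(x_1,\dots,x_n)\in\mathbb R^n \ :\ x_1+\cdots+x_n=h\,\bigr\},
\end{eqnarray*}
which is exactly the affine hyperplane with (Euclidean) normal vector $\mathds{1}=(1,1,\dots,1)$. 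Therefore
\begin{eqnarray*}
  \pi\bigl(\{H=h\}\bigr) &=& \mu_F\Bigl(\rho^{-1}\bigl(\{H=h\}\bigr)\Bigr) \;=\; \mu_F\bigl(\{F=h\}\bigr),
\end{eqnarray*}
the hyperplane perpendicular to $\mathds{1}$, as claimed.

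I expect the only delicate point to be the precise meaning of ``Lagrangian fibration'': at critical points of $\mu_F$, where $dF_1,\dots,dF_n$ become linearly dependent, the level set need not be a smooth Lagrangian submanifold, so strictly speaking $\pi$ is a Lagrangian fibration over the open set of regular values — the same convention already in force for the momentum map $\mu=(H_3,H_4)$ earlier in the paper. Beyond this, one should only ensure that the canonical coordinates in which $F$ is separable are global on $M$, which is the implicit reading of the hypothesis; the rest is a routine transport of the Liouville structure through $\rho$.
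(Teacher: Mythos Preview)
Your proof is correct and follows essentially the same route as the paper: extract the commuting integrals $F_1,\dots,F_n$ from separability, form the momentum map, observe that $\{F=h\}$ lands in the hyperplane $\langle x,\mathds 1\rangle=h$, and transport everything back through $\rho^{-1}$ to define $\pi=\mu_F\circ\rho^{-1}$. The paper additionally invokes Guillemin--Sternberg to describe ${\rm Img}(\mu)$ as a convex set and is slightly more careful to write $\mu_F(\{F=h\})\subset\{\langle x,\mathds 1\rangle=h\}$ rather than equality (your equality overstates the image, since $\mu_F$ need not be surjective onto the full hyperplane), but the substance of the argument is the same, and your remark on regular versus singular fibres matches the paper's Remark following the lemma.
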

\begin{proof} 
Since there exists $\rho:M\to M$, symplectomorphism such that 
$F=H\circ \rho$ is separable then there exist global coordinates $(Q,P)$ 
where $F(Q,P)=H(\rho(Q,P))$ separates in the form
\begin{eqnarray*}
	F(Q,P)=F_1(Q_1,P_1)+ \dots+F_n(Q_n,P_n),
\end{eqnarray*}
and $F_i(Q_i,P_i)=constant$ are $n$ first integrals for $X_F$. Moreover
$\{F,F_i\}=0$ for $i=1,\dots,n$. The Hamiltonian system is integrable 
by quadratures and we can consider 
the combined flow $\varphi^{\bf t}$ of all the Hamiltonian vector fields 
$X_{F_i}$ as a Hamiltonian action of the Lie group 
$G=\mathbb R^k\times\mathbb T^{n-k}$
on $(M,\omega)$ for some $1\leq k\leq n$.

The Hamiltonian action $G\times M\to M$ induces a momentum map 
\begin{eqnarray*}
	\mu = (F_1,F_2,\dots,F_n):M\to \mathfrak g^*,
\end{eqnarray*}
where $\mathfrak g^*\cong \mathbb R^n$ is the dual of the Lie algebra
associated to $G=\varphi^{\bf t}$. Its image ${\rm Img}(\mu)\subset\mathbb R^n$ 
is a convex 
polyhedron or cone whose vertices are the extremal values of $\mu$ as was studied 
by Guillemin and Stenberg in \cite{Gui1}.
The image $\mu(\Sigma)$ of every regular hypersurface of constant 
energy $\Sigma_h=F^{-1}(h)$ under the momentum map is a convex subset $\mathcal K_h$
of the linear affine subspace of codimension 1 of $\mathfrak g^*$
\begin{eqnarray*}
	x_1+x_2+ \dots+x_n -h&=& 0,
\end{eqnarray*}
where ${\bf x}=(x_1,\dots,x_n)\in\mathbb R^n$. We can write $\langle{\bf x},
\mathds 1 \rangle -h=0$, and in particular we have
\begin{eqnarray*}
	\mathcal K_h:=\{ {\bf x}\in\mathbb R^n|
	{\bf x}\in ({\rm Img}(\mu)\cap\{\langle {\bf x},\mathds 1\rangle
	=h\}) \}\subset 
	\{ {\bf x}\in\mathbb R^n| \langle{\bf x},\mathds 1 \rangle =h\}.
\end{eqnarray*}
Then $\pi:=\mu\circ\rho^{-1}:M\to \mathbb R^n$ is the smooth map we are 
looking for. 

\begin{eqnarray*}
\xymatrix{
  M \ar[r]^{\rho} \ar[rd]^{\pi}& M \ar[d]^{\mu}\\
   &  \mathfrak g^*    }
\end{eqnarray*}

Finally, we know that the fibers $\mathfrak L=\mu^{-1}(x)$ with $x\in\mathfrak g^*$
are Lagrangian submanifolds of $M$ for every $x\in {\rm Img}(\mu)$, this implies that
$\omega|_{\mathfrak L}\equiv 0$. Since $\rho^{-1}\in Sp(M,\omega)$ then 
\begin{eqnarray}
    \omega(x,y)=\omega(\rho^{-1}(x),\rho^{-1}(y))=0, \hspace{30pt} \forall\ x,y\in \mathfrak L
\end{eqnarray}
therefore $\hat{\mathfrak L}=\rho^{-1}(\mathfrak L)$ is a Lagrangian submanifold. 
We conclude that $\pi:M\to \mathbb R^n$ is also a Lagrangian fibration.
\end{proof}

\begin{rem} It is important to note that the interior points of the set 
$\mathcal K_h$ correspond to Lagrangian submanifolds of $M$. On the other hand, 
the points lying on the boundary $\partial \mathcal K_h$ correspond to isotropic 
submanifolds that we can think as degenerate Lagrangian submanifolds. 
We mean that if $x\in \partial\mathcal K$ then
$\omega|_{\mu^{-1}(x)}\equiv 0$. The isotropic submanifolds have 
the form $\mathbb R^r \times \mathbb T^s$ with $0\leq r,s<n$ and 
$r+s<n$
\end{rem}

\begin{cor}
	The hypersurfaces of constant energy of the circular double Sitnikov 
	problem under the momentum map {\rm (\ref{eqn:mom})} correspond to 
	segments of lines with slope $m=-1$ in ${\rm Img}(\mu)\subset
	\mathbb R^2$.
\end{cor}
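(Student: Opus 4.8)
The plan is to obtain the statement as a direct specialization of the preceding Lemma to the case $n=2$, so the first thing I would do is check that the circular double Sitnikov problem satisfies the hypotheses of the Lemma. The open symplectic manifold $(M,\omega)$ is the cophase space $T^*(\mathbb R^2\setminus\Delta)$, which is exact, and the Hamiltonian function is (\ref{eqn:ham04}) with $c=1$, namely $\underline H = H_3 + H_4$ with $H_i = \tfrac12 p_i^2 - (q_i^2+\tfrac14)^{-1/2}$. This function is \emph{already} separable in the coordinates $(q_3,q_4,p_3,p_4)$, so the hypothesis holds with $\rho$ the identity symplectomorphism of $M$ (equivalently one may take $\rho$ to be the regularizing map (\ref{eqn:coord}); both are symplectomorphisms under which the Hamiltonian takes separable form). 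Consequently the momentum map of the induced Hamiltonian action of $G=\mathbb R^k\times\mathbb T^{2-k}$ is exactly $\mu=(H_3,H_4)$ of (\ref{eqn:mom}) with $r(t)\equiv\tfrac12$, and the three cases for $G$ listed before the Lemma account for which region of ${\rm Img}(\mu)$ the base point lies in.

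With the hypotheses verified, the Lemma gives that the image under $\mu$ of a regular energy level $\Sigma_h=\underline H^{-1}(h)$ is the convex set
\begin{eqnarray*}
\mathcal K_h = {\rm Img}(\mu)\cap\{\,{\bf x}\in\mathbb R^2 : \langle{\bf x},\mathds 1\rangle = h\,\},
\end{eqnarray*}
which lies in the affine line $x_1+x_2=h$. In $\mathbb R^2$ a codimension-one affine subspace orthogonal to $\mathds 1=(1,1)$ is precisely a line of slope $m=-1$, and a convex subset of a line is an interval; so it only remains to see that the interval is a genuine (bounded) segment. For this I would determine ${\rm Img}(\mu)$ directly: each factor $H_i$ attains its minimum $-2$ at $q_i=p_i=0$ and tends to $+\infty$ otherwise, so $H_i$ ranges over $[-2,\infty)$ and ${\rm Img}(\mu)=[-2,\infty)\times[-2,\infty)$. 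Intersecting with $\{x_1+x_2=h\}$ yields the segment joining $(-2,h+2)$ and $(h+2,-2)$ when $h\ge -4$ and the empty set otherwise; in particular for the energies relevant to periodic orbits ($h_3,h_4<0$, hence $h<0$) the level set is compact and $\mathcal K_h$ is a nondegenerate slope-$-1$ segment, as claimed.

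I do not expect a real obstacle here, since the corollary is essentially bookkeeping on top of the Lemma; the only points that deserve care are the degenerate values of $h$. At $h=-4$ the segment collapses to the single point $(-2,-2)$, the total collision configuration with both secondaries at rest, which has been removed/regularized; and when $h$ is such that the line $x_1+x_2=h$ meets $\partial({\rm Img}(\mu))$ (that is, $x_1=-2$ or $x_2=-2$), the Remark tells us those endpoints correspond to isotropic rather than Lagrangian fibres, so the set-theoretic statement "$\mu(\Sigma_h)$ is a segment of slope $-1$" still holds while the Lagrangian fibration degenerates over the boundary. Hence the statement follows, and I would phrase the proof simply as: apply the Lemma with $n=2$ to $(M,\omega,X_{\underline H})$, then note that a convex subset of the line $x_1+x_2=h\subset\mathbb R^2$ is a segment of slope $-1$.
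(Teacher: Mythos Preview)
Your proposal is correct and follows essentially the same route as the paper: apply the Lemma with $n=2$ and observe that the affine hyperplanes orthogonal to $\mathds 1=(1,1)$ in $\mathbb R^2$ are exactly the lines of slope $-1$. Your write-up is considerably more thorough than the paper's two-line proof, which does not bother to verify the hypotheses, compute ${\rm Img}(\mu)$, or argue that the resulting interval is bounded; those additions are fine but not strictly needed for the corollary as stated.
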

\begin{proof} Since the circular double Sitnikov problem is a Hamiltonian
system defined on $M\cong \mathbb R^4$, the image of the 
momentum map (\ref{eqn:mom}) is a subset of $\mathbb R^2$ and 
the affine subspaces
perpendicular to $\mathds 1_{\mathbb R^2}=(1,1)$ are the straight lines 
with slope $m=-1$.
\end{proof}

\subsection{Level sets of fixed energy.}

In order to describe the surfaces of constant energy and their foliations,
we consider the separable Hamiltonian function in the following
form
\begin{eqnarray*}
	H= h = h_3+ h_4,
\end{eqnarray*}
where $h_i$ corresponds to an energy level of the circular classical
Sitnikov problem, for each  $i=3,4$. Then we consider the image of the 
momentum map (\ref{eqn:mom}) and finally we construct the 
foliation following the straight line associated to each surface of constant 
energy in ${\rm Img}(\mu)$.

From the solutions for the classical Sitnikov problem \cite{Bel1}, we know
that $h_i$, for $i=3,4$, is defined in $[-2,\infty)$ and the orbits have the 
following behavior: if $-2<h_i<0$ the circular
Sitnikov problem has periodic orbits, for $h_i=0$ it has a parabolic orbit and for
$h_i>0$ it has hyperbolic orbits. Due to the restriction on each
relative energy $h_i$, for $i=3,4$, the total energy $h=h_3+ h_4$ has
the image $(-4,\infty)$
and we have the following topology (see Figure \ref{fig:hfol}):

\begin{figure}[ht]
  \centering
    \includegraphics[scale=.5]{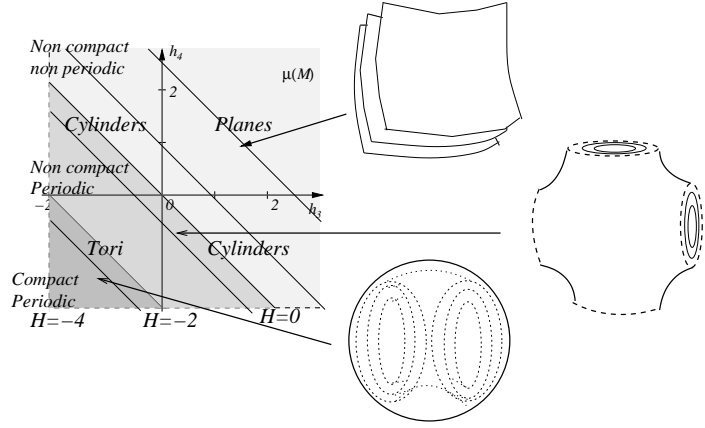}
    \caption{Lines associated to surfaces of constant energy}
    \label{fig:hfol}
\end{figure}

\begin{itemize}
	\item If $h=-4$ this level does not exist in the real problem $q_1=q_2$
   because the secondaries are in the same place at the same time (impossible).
   \item If $-4<h<-2$ the energy levels corresponds to spheres $S^3$ foliated
   by tori $T^2$ and two singular closed curves.
   \item If $h=-2$ the energy surface is a $3$-sphere without four points.
   \item If $-2<h<0$ the energy surfaces are $3$-spheres with four
   discs $D^2$ as boundaries.
   \item If $h=0$ the foliation contains two disjoint cylinders with four planes
   in the middle point (when $h_3=h_4=0$).
   \item If $h>0$ the foliations contains cylinders and planes.
\end{itemize}
A few of these foliations are shown in figure
\ref{fig:hfol}.

The most interesting energy levels are when $h=-2$ and $h=0$
because these are bifurcation values for the topology of the constant
energy surfaces. Other interesting energy levels are $-4<h<-2$ because
the energy surfaces are $3$-spheres foliated by $2$-tori and they have
all the solutions bounded,
and this gives the possibility of finding interesting periodic solutions that
will be preserved under small perturbations of the eccentricity $e$ for the
keplerian solutions of the primary orbits, or perturbations on the mass
parameter $\mu$ of the secondaries.

\section{Periodic solutions for the circular double Sitnikov problem}

At this point we have shown that every solution of the 2+2 Sitnikov problem
has  the form 
\begin{eqnarray}
  \phi(t)= \left( k_3\frac{s(\nu)d(\nu)}{1-2k_3^2s(\nu)^2},2\sqrt{2}k_3c(\nu),k_4\frac{s(\nu)d(\nu)}{1-2k_4^2s(\nu)^2},2\sqrt{2}k_4c(\nu) \right),
\end{eqnarray}
with $\nu=\nu(t)$. When the values of the momentum map are in the third
 quadrant the evolution of the system is bounded.
In this region it is possible to have periodic orbits of the four bodies 
under specific conditions. In what follows we give some definitions and 
we establish the conditions that produce periodic orbits in the circular
double Sitnikov problem.

\begin{defn} We say that $\varphi(t)$ is a periodic solution of period
$\tau$ with $\tau>0$ if $\varphi(t+\tau)=\varphi(t)$ for all $t\in\mathbb R$
and there does not exist $\hat\tau\in (0,\tau)$ such that 
$\varphi(t+\hat\tau)=\varphi(t)$, i.e., $\tau$ is the minimum period.
\end{defn}

Since the solutions of the double Sitnikov problem are in terms of the
Jacobian elliptic functions $sn(\nu,k)$, $cn(\nu,k)$ and $dn(\nu,k)$ which 
are defined on the Riemann surface generated by two primitive periods $2K$ and
$2iK^\prime$  in general, they accept complex arguments and modules. In fact,
these functions are analytic functions in the module $k\in(\mathbb C\setminus \{-1,1\})$,
but just if $k\in(\mathbb R\setminus\{\-1,1\})$ its image is real. Here, $K=K(k)$ is the complete
elliptic integral of first type.
Therefore, the body with position $q_i$ will have a return time $\nu=4K(k_i)$ 
in the rescaled time 
and $t(4K)=T(h_i)$ in the real time $t=t(\nu)$, for $i=3,4$,
\begin{eqnarray}
	T(h)&=& \frac{\sqrt{2}}{2(1-2k^2)}\left[ 2E(k)-K(k)+\Pi(2k^2,k) \right],
	\label{eqn:Tper}
\end{eqnarray}
where $k=\frac{\sqrt{2+h}}{2}$ and $K$, $E$, and $\Pi$ are the complete elliptic 
integrals of first, second and third type respectively.

We will need some more  properties about the function $T(k)=T(h)$, which are
summarized in the following result.
\begin{thm}[\cite{Bel1}]
	Let $T(k)=T(h)$ be the period of the solution of the circular 
	Sitnikov problem with energy $h$; then the following statements hold.
	\begin{enumerate}
		\item $
				\lim_{h\to -2^+} T(h) = \frac{\pi}{\sqrt{2}}.$
		\item $
				\lim_{h\to 0^-} T(h) = \infty.$
		\item $
				\frac{dT}{dh}>0,\hspace{20pt}\forall h\in(-2,0).$
 		\item $
 				\lim_{h\to-2^+}\frac{dT}{dh}=\frac{\pi(1+4\sqrt{2})}{16}.$
 		\item $
 				\lim_{h\to0^-}\frac{dT}{dh}=\infty.$
 \end{enumerate}
 	\label{teo:period}
 \end{thm}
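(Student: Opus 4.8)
The plan is to work directly with the closed-form expression \eqref{eqn:Tper} for $T(h)$ in terms of complete elliptic integrals, together with the substitution $k=\frac{\sqrt{2+h}}{2}$, which maps $h\in(-2,0)$ monotonically onto $k\in(0,\tfrac{1}{\sqrt2})$. Since $\frac{dk}{dh}=\frac{1}{4k}>0$ on this range, every monotonicity or limit statement in the variable $h$ is equivalent to the corresponding statement in the variable $k$, and all five items reduce to the asymptotic analysis of
\[
T(k)=\frac{\sqrt 2}{2(1-2k^2)}\bigl[2E(k)-K(k)+\Pi(2k^2,k)\bigr]
\]
as $k\to 0^+$ and as $k\to\tfrac{1}{\sqrt2}^-$.

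For items (1) and (4) I would expand $K$, $E$, and $\Pi(2k^2,k)$ in their convergent power series in $k$ near $k=0$: using $K(k)=\frac{\pi}{2}(1+\tfrac14 k^2+\cdots)$, $E(k)=\frac{\pi}{2}(1-\tfrac14 k^2+\cdots)$, and the known series for the complete integral of the third kind $\Pi(n,k)=\frac{\pi}{2}(1+\tfrac12 n+\cdots)$ with $n=2k^2$, one finds that the bracket $[2E-K+\Pi]$ equals $\pi + O(k^2)$ while $\frac{1}{1-2k^2}=1+2k^2+O(k^4)$, so $T(k)\to\frac{\sqrt2}{2}\pi=\frac{\pi}{\sqrt2}$, giving (1); pushing the expansion one more order and applying the chain rule $\frac{dT}{dh}=\frac{1}{4k}\frac{dT}{dk}$ produces the constant $\frac{\pi(1+4\sqrt2)}{16}$ in (4). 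For items (2) and (5) the key is the behaviour as $k\to\tfrac{1}{\sqrt2}^-$: there the prefactor $\frac{1}{1-2k^2}$ blows up, and one must check that the bracket $2E(\tfrac1{\sqrt2})-K(\tfrac1{\sqrt2})+\Pi(1,\tfrac1{\sqrt2})$ does not vanish — indeed $\Pi(n,k)\to+\infty$ as $n\to 1^-$ because the characteristic hits the singularity $1-n\sin^2\varphi$ at $\varphi=\pi/2$, so the bracket itself already diverges, and $T(k)\to+\infty$, yielding (2). Differentiating, $\frac{dT}{dk}$ diverges even faster, so (5) follows; care is needed to extract the precise blow-up rate of $\Pi(2k^2,k)$ (it is logarithmic in $1-2k^2$, coming from $\int^{\pi/2}\frac{d\varphi}{1-2k^2\sin^2\varphi}$) but since we only claim divergence to $+\infty$, the leading term suffices.

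The genuinely substantive item is (3), the strict monotonicity $\frac{dT}{dh}>0$ on all of $(-2,0)$, because here a crude asymptotic estimate is not enough — one needs a sign on the whole interval. I would differentiate $T(k)$ using the standard derivative formulas $\frac{dK}{dk}=\frac{E-(1-k^2)K}{k(1-k^2)}$, $\frac{dE}{dk}=\frac{E-K}{k}$, and the analogous (messier) formula for $\partial_k\Pi(n,k)$ and $\partial_n\Pi(n,k)$ with $n=2k^2$ so that $\frac{d}{dk}\Pi(2k^2,k)=\partial_n\Pi\cdot 4k+\partial_k\Pi$. After clearing the common factor $k(1-k^2)(1-2k^2)^2$ one is left with showing that a certain combination of $K$, $E$, $\Pi$ with polynomial-in-$k^2$ coefficients is positive on $(0,\tfrac1{\sqrt2})$. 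The cleanest route, and the one I expect to need, is to avoid $\Pi$ altogether: go back to the integral representation $T(h)=\int p_i\,dq_i$-type formula (equivalently \eqref{eqn:int1}/\eqref{eqn:Jh} for the action, since $T=2\pi\,dJ/dh$), write $T(h)$ as $\frac{d}{dh}$ of a single convergent integral in $s$ over $[0,1]$, differentiate under the integral sign, and show the resulting integrand is positive pointwise. That reduces (3) to an elementary inequality inside the integral rather than an identity among special functions, and sidesteps the main obstacle, which is otherwise the proliferation of $\Pi$-derivative terms. This monotonicity step is where I expect to spend most of the effort; the limits are comparatively mechanical once the right series and singular expansions are in hand.
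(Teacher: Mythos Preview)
The paper does not actually prove this theorem: immediately after the statement it writes ``The proof of this theorem follows directly from the definition of the period $T(k(h))=T(h)$ as function of $h$. We refer the reader to \cite{Bel1} for details.'' So there is no argument in the paper to compare against; the result is simply quoted from Belbruno--Llibre--Oll\'e.

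Your proposal, by contrast, is a genuine proof outline, and the strategy is sound. The limit computations (1), (2), (4), (5) via the substitution $k=\tfrac{1}{2}\sqrt{2+h}$ and the standard small-$k$ series / $n\to 1^-$ singular behaviour of $\Pi(n,k)$ are exactly the right mechanism, and your check that the bracket tends to $\pi$ as $k\to 0$ recovers $T\to\pi/\sqrt{2}$ correctly. For (3) your instinct to abandon the $K,E,\Pi$ representation and instead differentiate the integral form of $T(h)$ under the integral sign is the efficient route; indeed from the quadrature $T(h)=4\int_0^{q_{\max}(h)}\bigl(h+(q^2+\tfrac14)^{-1/2}\bigr)^{-1/2}\,dq$ one can, after the change of variable $s$ that fixes the endpoints, see pointwise positivity of $\partial_h$ of the integrand, which is cleaner than tracking $\partial_n\Pi$ and $\partial_k\Pi$ simultaneously. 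In short, your write-up already goes well beyond what the paper supplies; if you want a published reference to cite for the details, it is \cite{Bel1} (and the companion \cite{Cor1}), which is precisely where the paper sends the reader.
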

 The proof of this theorem follows directly from the definition of the period 
 $T(k(h))=T(h)$ as function of $h$. We refer the reader to \cite{Bel1}
 for details. 

With these elements, we will characterize the periodic orbits of the 
double Sitnikov problem. We will use the notation $(p,q,n)=1$ to mean
that the {\it greatest common divisor} is ${\rm gcd}(p,q,n)=1$, in other words,
that the three numbers have not common factors at the same time.
\begin{prop}
      For every periodic solution of the double Sitnikov problem there exist 3-plets
$(p,q,n)\in\mathbb Z^3$ such that $(p,q,n)=1$, and $p>\frac{q}{2\sqrt{2}}$ 
and $p>\frac{n}{2\sqrt{2}}$ holds.
  The periods of these solutions are related to the partial energies by 
\begin{eqnarray*}
	\tau = 2p\pi = q T( h_3) = n T( h_4).
\end{eqnarray*}
\end{prop}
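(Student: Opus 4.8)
The plan is to translate the periodicity requirement for the four-body orbit into a simultaneous commensurability condition on the two uncoupled Sitnikov periods $T(h_3)$ and $T(h_4)$, and then to extract the integer data $(p,q,n)$ together with the stated inequalities. First I would observe that by Theorem~\ref{teo:sol} the full solution $\phi(t)$ is determined componentwise by the two angle coordinates $\theta_3,\theta_4$ (equivalently by $\nu_3(t),\nu_4(t)$), each evolving on its own circle of period $T(h_3)$, $T(h_4)$ respectively. Hence $\phi$ is periodic with period $\tau$ if and only if $\tau$ is a common multiple of the two subsystem periods, i.e.\ there are positive integers $q,n$ with $\tau = qT(h_3) = nT(h_4)$; minimality of $\tau$ (Definition before the proposition) forces $\gcd(q,n)=1$ together with each subsystem period being traversed a whole number of times. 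Here I must be slightly careful: if one subsystem sits on a non-periodic (parabolic/hyperbolic, $h_i\ge 0$) orbit the construction fails, so the statement implicitly lives in the third-quadrant (bounded) region where $h_3,h_4\in(-2,0)$, as the surrounding text makes explicit.

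Next I would introduce the third integer $p$. The cleanest route is to use the action–angle description from Section~3: writing $\theta_i = \Omega_i^{-1}t + \theta_{0,i}$ with $\Omega_i = 2\pi/T(h_i)$, the orbit closes up exactly when each $\theta_i$ advances by an integer multiple of $2\pi$, which for the $i=3$ factor happens after time $qT(h_3)$ and for the $i=4$ factor after $nT(h_4)$. Matching these to a common $\tau$ and normalizing by $2\pi$ suggests setting $\tau = 2p\pi$ for some rational, and in fact integer, $p$; one shows $p\in\mathbb Z$ because $\tau/(2\pi) = qT(h_3)/(2\pi) = q\,\Omega_3^{-1}$ and the closure of \emph{both} circles pins the common return to an integer number of $2\pi$-rotations of the combined toral flow. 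Dividing the triple $(p,q,n)$ by $\gcd(p,q,n)$ we may assume $(p,q,n)=1$, which is exactly the minimal-period normalization.

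For the inequalities $p>\frac{q}{2\sqrt 2}$ and $p>\frac{n}{2\sqrt 2}$, I would feed the relations $2p\pi = qT(h_3)$ and $2p\pi = nT(h_4)$ into the quantitative bounds of Theorem~\ref{teo:period}. Since $dT/dh>0$ on $(-2,0)$ and $\lim_{h\to -2^+}T(h) = \pi/\sqrt 2$, we get $T(h_i) > \pi/\sqrt 2$ for all admissible $h_i\in(-2,0)$. Therefore $2p\pi = qT(h_3) > q\pi/\sqrt 2$, i.e.\ $p > q/(2\sqrt 2)$, and identically $p > n/(2\sqrt 2)$ from the $h_4$ equation. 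This is the step where Theorem~\ref{teo:period}(1) and (3) do the real work; everything else is bookkeeping. I expect the main obstacle to be not any hard estimate but the careful justification that $p$ is genuinely an integer and that the triple can be taken primitive simultaneously — in other words, correctly identifying the common period of the product flow on $\mathbb T^2$ and ruling out that a coarser lattice of return times is imposed by the minimal-period clause in the definition. Once the lattice-of-periods picture is set up correctly, the inequalities drop out immediately from the lower bound $T>\pi/\sqrt 2$.
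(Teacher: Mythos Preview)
The paper does not supply a proof of this proposition; it is stated and immediately followed by two remarks. So there is no ``paper's own proof'' to compare against, and your proposal must stand on its own.

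Most of your outline is sound: the decomposition of the flow into two uncoupled circular Sitnikov motions, the commensurability condition $\tau=qT(h_3)=nT(h_4)$ for integers $q,n$, and the derivation of the inequalities $p>q/(2\sqrt{2})$, $p>n/(2\sqrt{2})$ from the lower bound $T(h)>\pi/\sqrt{2}$ of Theorem~\ref{teo:period}(1),(3) are all correct and exactly what is needed.

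There is, however, a genuine gap at the point you yourself flag as ``the main obstacle'': your argument that $p=\tau/(2\pi)$ is an integer. You write that ``the closure of both circles pins the common return to an integer number of $2\pi$-rotations of the combined toral flow,'' but this is not true for a generic two-torus flow: if only the two secondary periods $T(h_3)$, $T(h_4)$ were in play, their least common multiple would typically be an irrational multiple of $2\pi$, and nothing in the action--angle picture of Section~3 forces $\tau/(2\pi)\in\mathbb Z$. The missing ingredient is the primaries. This is a \emph{four}-body problem: the two primaries move on a circular Keplerian orbit whose period, in the normalization $m_1=m_2=\tfrac12$ used throughout the paper, equals $2\pi$. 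A periodic solution of the full configuration requires the primaries to return to their initial positions as well, and that is what forces $\tau=2p\pi$ with $p\in\mathbb N$. Once you invoke the primaries' period, the integrality of $p$ is immediate, the primitivity $(p,q,n)=1$ follows from minimality of $\tau$, and the rest of your argument goes through unchanged.
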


\begin{rem} The couples $(p,q)$ and $(p,n)$ are not necessarily coprime,
however, at least one of the three combinations $(p,q),\ (p,n), (q,n)$ must be 
coprime to assure that $(p,q,n)=1$.
\end{rem}

\begin{rem} The energy surface of the double Sitnikov problem that accepts 
      periodic solutions with period 
      $
	      \tau = 2\pi = T(h_3)=T(h_4) 
      $
      is a non compact hypersurface.
    The value of $T(h)$  in $h=-1$ is
\begin{eqnarray*}
	T(-1) &=& \sqrt{2}\left( 2 E(\frac{1}{2})-F(\frac{1}{2})+ \Pi(\frac{1}{2},\frac{1}{2}) \right).
\end{eqnarray*}

The numerical estimation of this value is
\begin{eqnarray*}
	\frac{T(-1)}{2\pi} &=& 0.824429907123718 < 1.
	\label{eq:}
\end{eqnarray*}
Since the function $T(h)$ is an increasing function of $h$ thus
$T(h_i)=2\pi$ is obtained for $h_i>-1$ and $-2<h_3+h_4$,
therefore $\Sigma = H^{-1}(h_3+h_4)$
is not compact (see Figure \ref{fig:hfol}).
\end{rem}

\begin{defn} We say that an energy surface $\Sigma=H^{-1}(h)$
\emph{accepts} a periodic solution if there exists $p,q,n\in \mathbb N$  
with the following properties: 
\begin{enumerate}
	\item[P1.] $(p,q,n)=1$, 
	\item[P2.] $p>\frac{q}{2\sqrt{2}}$, $p>\frac{n}{2\sqrt{2}}$ 
\end{enumerate}
such that 
\begin{eqnarray*}
	\Sigma &=& H^{-1}\left(
	T^{-1}\left(\frac{p}{q}2\pi\right)+
	T^{-1}\left(\frac{p}{n}2\pi\right) \right).
\end{eqnarray*}
We will write $\Sigma_h=H^{-1}(h)$ in order to make clear the dependence 
on $h$.
\end{defn}

We denote the set of fixed energy surfaces that accept periodic
orbits as
\begin{eqnarray*}
	\mathfrak M &=& \left\{ \Sigma = H^{-1}(h_*)| 
 	h_*=T^{-1}\left(\frac{p}{q}2\pi\right) +
	T^{-1}\left(\frac{r}{s}2\pi\right),
	P1,P2\  {\rm holds}
	\right\}.
\end{eqnarray*}

\begin{thm}[\cite{Jim2}]
	In the circular double Sitnikov problem there exists a countable 
	number of energy surfaces $\Sigma\in\mathfrak M$ that 
	contains resonant tori foliated by periodic orbits. Moreover, 
	the set of values $h_*\in H(M)\subset \mathbb R$ 
	such that $\Sigma_{h_*}
	\in \mathfrak M$
is dense in (-4,0) and have zero measure in $\mathbb R$.\label{teo:tori}
\end{thm}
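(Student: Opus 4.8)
The plan is to reduce the statement to elementary properties of the period function $T$ established in Theorem~\ref{teo:period}, together with a counting/density argument for the rational numbers $p/q$ and $p/n$ appearing in the resonance conditions. First I would note that an energy surface $\Sigma_h$ carries a resonant torus foliated by periodic orbits precisely when $h=h_3+h_4$ with $T(h_3)$ and $T(h_4)$ commensurable, i.e. there exist positive integers $p,q,n$ with $qT(h_3)=nT(h_4)=2p\pi$; this is exactly the content of the Proposition preceding the statement, and the constraints P1 (coprimality) and P2 ($p>q/(2\sqrt2)$, $p>n/(2\sqrt2)$) come from minimality of the period and from $T(h_i)>T(-2^+)=\pi/\sqrt2$ forcing $2p\pi/q>\pi/\sqrt2$. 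So $\mathfrak M$ is the image under $(h_3,h_4)\mapsto h_3+h_4$ of the set of pairs $(T^{-1}(2p\pi/q),T^{-1}(2p\pi/n))$ with $(p,q,n)=1$ and P2 holding.

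The countability claim is then immediate: $\mathfrak M$ is indexed by a subset of $\mathbb Z^3$, hence is at most countable, and it is infinite because one can exhibit infinitely many admissible triples — e.g. fix $q=n$ and let $p$ range over all integers with $p>q/(2\sqrt2)$ and $(p,q)=1$, giving $h_3=h_4=T^{-1}(2p\pi/q)$ and distinct values of $h=2T^{-1}(2p\pi/q)$ since $T^{-1}$ is strictly monotone by item~(3) of Theorem~\ref{teo:period}. For the density assertion in $(-4,0)$: by items~(1)--(3) of Theorem~\ref{teo:period}, $T$ is a strictly increasing homeomorphism from $(-2,0)$ onto $(\pi/\sqrt2,\infty)$, so $T^{-1}$ is a strictly increasing homeomorphism from $(\pi/\sqrt2,\infty)$ onto $(-2,0)$. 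The set of values $2p\pi/q$ with $p/q\in\mathbb Q$ and $2p\pi/q>\pi/\sqrt2$ is dense in $(\pi/\sqrt2,\infty)$, hence $\{T^{-1}(2p\pi/q)\}$ is dense in $(-2,0)$; adding two such dense sets, the sums $T^{-1}(2p\pi/q)+T^{-1}(2p\pi/n)$ are dense in $(-4,0)$. The coprimality condition P1 costs nothing, since any rational can be approached by rationals whose numerators and denominators avoid any fixed common factor with a third given integer (one clears common factors and the resulting reduced fractions remain dense). Finally, zero measure in $\mathbb R$ follows because $\mathfrak M$ is countable, hence Lebesgue-null.

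The one point requiring genuine care — and the step I expect to be the main obstacle — is handling the interaction of conditions P1 and P2 simultaneously while preserving density. Density of the individual rational resonance values is clear, but we need triples $(p,q,n)$ that are \emph{jointly} coprime and \emph{simultaneously} satisfy the two lower bounds on $p$, while $2p\pi/q$ and $2p\pi/n$ independently approximate two prescribed targets in $(\pi/\sqrt2,\infty)$. The clean way to do this is: pick rational approximants $a/b\to\tau_3$ and $a'/b'\to\tau_4$ with $\tau_i=T(h_i)/(2\pi)$, put $p=\operatorname{lcm}$-type common multiple and $q=p b/a$, $n=p b'/a'$ on a common scale, then divide the whole triple by $\gcd(p,q,n)$; strict monotonicity and continuity of $T$ guarantee the perturbed energies still lie in $(-4,0)$ and approximate $h$. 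One must also check the endpoint behavior: as $h_i\to 0^-$, $T(h_i)\to\infty$ (item~(2)), so arbitrarily large resonance ratios are attainable and the sums $h_3+h_4$ genuinely fill $(-4,0)$ up to the endpoints, while $h=-4$ is excluded as already noted. This is the argument carried out in detail in \cite{Jim2}, to which we refer the reader.
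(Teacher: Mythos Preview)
Your argument is correct and matches the paper's proof in spirit; the differences are organizational rather than substantive. The paper packages countability via two preparatory lemmas (existence of a periodic orbit of period $2N\pi$ for every $N$, and a finite upper bound on the number of resonant tori of each fixed period), whereas you observe more directly that $\mathfrak M$ is indexed by a subset of $\mathbb Z^3$. For density the paper introduces the map $\mathcal T:(h_3,h_4)\mapsto (T(h_3)/2\pi,\,T(h_4)/2\pi)$, pulls back the dense set $\mathbb Q^2\cap\mathrm{Img}(\mathcal T)$, and for a rational point $(r/s,u/v)$ builds the triple $(ru/g,\,su/g,\,rv/g)$ with $g=\gcd(ru,su,rv)$; your lcm-and-divide-by-gcd construction in the last paragraph is the same mechanism. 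Zero measure follows identically in both from countability.

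One remark: your sentence ``adding two such dense sets, the sums $T^{-1}(2p\pi/q)+T^{-1}(2p\pi/n)$ are dense in $(-4,0)$'' is not literally a sum-of-dense-sets argument, since the two summands share the parameter $p$; you correctly recognize this and repair it in the final paragraph, but it would read better to go straight to the construction rather than first stating the loose version.
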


It is a well-known result that resonant tori form a dense set in the 
image of the momentum map. However, Pugh and Robinson proved in 1983 
\cite{Pug1} that generically the periodic orbits of Hamiltonian systems are dense  
in any open set contained in the union of compact and regular energy 
surfaces. Moreover, they argued 
that using a Fubini's argument, this result apply for any given compact 
and regular 
energy surface.
In contrast, last theorem assures that there exists a set of values 
$h_*\in \mathbb R$ of full measure such that $\Sigma_{h_*}\notin\mathfrak M$. 
That is a generic behavior of completely integrable Hamiltonian systems.

The proof of Theorem \ref{teo:tori} is an immediate consequence of the following 
two lemmas that we now state and prove.
\begin{lem}\label{prop:exist}
	For each $n\in \mathbb N$ the circular double Sitnikov problem has  
	periodic solutions of period $2n\pi$.
\end{lem}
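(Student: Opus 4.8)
The plan is to exhibit, for each $n\in\mathbb N$, a specific choice of energy levels $h_3,h_4\in(-2,0)$ whose associated solutions are both periodic in the real time $t$ with a common period equal to $2n\pi$. The key is Theorem~\ref{teo:period}: the period function $T(h)$ of the classical circular Sitnikov problem is continuous and strictly increasing on $(-2,0)$, with $\lim_{h\to-2^+}T(h)=\pi/\sqrt 2$ and $\lim_{h\to0^-}T(h)=\infty$. Hence $T$ is a homeomorphism from $(-2,0)$ onto the open interval $(\pi/\sqrt 2,\infty)$, and in particular it has a continuous strictly increasing inverse $T^{-1}$ defined on that interval.

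First I would check that $2n\pi$ lies in the range of $T$ for every $n\ge 1$. Since $2\pi>\pi/\sqrt2$ (indeed $2>1/\sqrt2$), we have $2n\pi\ge 2\pi>\pi/\sqrt2$ for all $n\ge1$, so $h_i^{(n)}:=T^{-1}(2n\pi)\in(-2,0)$ is well defined. Now set $h_3=h_4=h^{(n)}:=T^{-1}(2n\pi)$ and consider the solution $\phi(t)$ of the circular double Sitnikov problem with these partial energies, which by Theorem~\ref{teo:sol} has the explicit form in terms of the Jacobi elliptic functions with modulus $k_i=\sqrt{2+h^{(n)}}/2$. Each coordinate pair $(q_i(t),p_i(t))$ solves the corresponding uncoupled classical Sitnikov problem at energy $h^{(n)}$, hence is periodic in $t$ with minimal period $T(h^{(n)})=2n\pi$. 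Since both subsystems share the same period $2n\pi$, the combined solution $\phi(t)$ is periodic with period $2n\pi$; in the $(p,q,n)$-triple language of the preceding Proposition this is the resonance $(p,q,n)=(n,1,1)$, for which P1 is trivially satisfied and P2 reads $n>\tfrac{1}{2\sqrt2}$, which holds for every $n\ge1$.

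The only genuine point to verify is that $2n\pi$ is actually the \emph{minimal} period of $\phi(t)$, as demanded by Definition of periodic solution — a priori a smaller common period could occur. Here I would argue directly: the minimal period of each one-degree-of-freedom factor $(q_i,p_i)$ is exactly $T(h^{(n)})=2n\pi$ (this is what ``period'' means in Theorem~\ref{teo:period}, corresponding to the return time $\nu=4K(k_i)$ in the rescaled variable), so any period of $\phi$ must be an integer multiple of $2n\pi$; the smallest positive such multiple is $2n\pi$ itself. Thus the minimal period is $2n\pi$ and the solution is periodic in the sense of the Definition, completing the proof. I expect the main (mild) obstacle to be this minimality bookkeeping together with citing Theorem~\ref{teo:period} correctly to guarantee that $T$ surjects onto an interval containing every $2n\pi$; the rest is an immediate consequence of the decoupling of the Hamiltonian~(\ref{eqn:ham04}) into two classical Sitnikov problems.
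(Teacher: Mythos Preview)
Your proof is correct and follows essentially the same route as the paper's: both pick the triple $(p,q,n)=(N,1,1)$, use the monotonicity and range of $T$ to find $h_3=h_4=T^{-1}(2N\pi)\in(-2,0)$, and conclude that the resulting torus is foliated by $2N\pi$-periodic orbits. The only differences are cosmetic: the paper invokes an external reference (Proposition~2.8 in \cite{Cor1}) for the surjectivity of $T$ where you appeal directly to Theorem~\ref{teo:period}, and you add an explicit check that $2n\pi$ is the \emph{minimal} period (as required by the paper's Definition), a point the paper's proof leaves implicit.
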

 We will just exhibit at least one periodic solution of period
$\tau=2N\pi$. This is immediate from the fact that there exists such periodic
solutions in the circular (classical) Sitnikov problem.

\begin{proof}
For any $N\in\mathbb N$ we can choose the combination
$p=N$ and $q=n=1$ that produce 
\begin{eqnarray*}
	(p,q)=1 &{\rm and}&  (p,n)=1,
\end{eqnarray*}
with
\begin{eqnarray}
	p>\frac{q}{2\sqrt{2}} &{\rm and}&  p>\frac{n}{2\sqrt{2}},\label{eqn:cond}
\end{eqnarray}
and Proposition $2.8$ in \cite{Cor1} assures that there exists
$ h_1,h_2\in (-2,0)$ such that 
\begin{eqnarray*}
	T( h_3)=\frac{2\pi p}{q} &{\rm and}& T(h_4)=\frac{2\pi p}{n}.
\end{eqnarray*}
Then the hypersurface $H^{-1}( h_3+ h_4)$ contains a torus foliated
by a family of periodic orbits with period
\begin{eqnarray*}
	\tau = 2\pi N= T( h_3) = T( h_4).
\end{eqnarray*}
\end{proof}

The following lemma is about the finiteness of resonant tori foliated
by periodic orbits of prescribed period $\tau$. 

\begin{defn} We define the {\it totient} function or {\it Euler's phi 
function} $\varphi(p)$ of an integer $p$  by
\begin{eqnarray*}
	\varphi(p)&=& p \prod_{n|p}\left( 1-\frac{1}{n} \right)
\end{eqnarray*}
where the product runs on all $n$ coprime to $p$. It represents
the number of positive integers less than or equal to $p$ that are 
coprime to $p$.
\end{defn}
\begin{lem}
	For each $N\in\mathbb N$ fixed,
	the circular double Sitnikov problem
	have a finite number of tori foliated
	by periodic orbits with	period $\tau=2N\pi$. The number
\begin{eqnarray*}
	8 N\varphi(N) +  
	\sum_{
	\begin{array}{c}
		q<2\sqrt{2}N,\\
		(N,q)\neq 1
	\end{array}
		}\varphi(q) 
\end{eqnarray*}
is an upper bound (although is not an optimal bound).
\end{lem}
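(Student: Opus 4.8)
The plan is to reduce the counting to a problem about how many ways a prescribed period $\tau=2N\pi$ can be realized as $T(h_3)$ and as $T(h_4)$ subject to the arithmetic constraints P1 and P2. First I would recall from Theorem~\ref{teo:period} that on $(-2,0)$ the map $h\mapsto T(h)$ is a strictly increasing diffeomorphism onto $(\pi/\sqrt2,\infty)$; hence $T^{-1}$ is well defined and single-valued there. Therefore a torus of period $\tau=2N\pi$ is determined by the pair of relative energies $(h_3,h_4)$, and by the Proposition each such torus corresponds to a triple $(p,q,n)$ with $(p,q,n)=1$, $p>q/(2\sqrt2)$, $p>n/(2\sqrt2)$, and $2p\pi=\tau=qT(h_3)=nT(h_4)$. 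Since $\tau=2N\pi$ is fixed, we must have $p\mid$ nothing forced directly, but $q$ and $n$ satisfy $qT(h_3)=2N\pi$ with $T(h_3)>\pi/\sqrt2$, so $q<2\sqrt2\,N$, and likewise $n<2\sqrt2\,N$; moreover $p$ is then determined by $q$ (or $n$) together with the requirement $2p\pi=qT(h_3)$ — but here care is needed, because $p$ need not equal $N$: what is fixed is the \emph{product} $pT$-relation through $\tau$. The cleaner bookkeeping is: write $T(h_3)=\frac{p}{q}2\pi$ and $T(h_4)=\frac{p}{n}2\pi$ as in the definition of $\mathfrak M$; then $\tau=2N\pi$ forces $p/q$ and $p/n$ to be the reduced-or-unreduced fractions whose values are $T(h_3)/(2\pi)$ and $T(h_4)/(2\pi)$, and the relation $qT(h_3)=nT(h_4)=2N\pi$ says exactly that the least common value of the two periods is $2N\pi$.

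Next I would organize the count by the value of $q$ (the ``multiplier'' for the third body). For a torus of \emph{minimal} period exactly $2N\pi$ one needs $\mathrm{lcm}$-type compatibility; but the lemma only claims an upper bound, so it suffices to overcount. Split into two cases according to whether $\gcd(N,q)=1$. If $(N,q)=1$: then from $qT(h_3)=2N\pi$ and the representation $T(h_3)=\frac{p}{q}2\pi$ with the fraction in lowest terms we get $p=N$, so $q$ ranges over the $\varphi$-type count of admissible numerators; the bound $p>q/(2\sqrt2)$ becomes $q<2\sqrt2\,N$, and for each such $q$ the companion $n$ ranges over integers $<2\sqrt2\,N$ with $p/n$ forced, giving at most $2\sqrt2\,N$ choices — and similarly the symmetric role of $n$ — which is where the factor $8N\varphi(N)$ comes from: $\varphi(N)$ counts the genuinely coprime $q$'s up to $N$ (times a small constant for the range extension to $2\sqrt2\,N$ and for the symmetric $(p,n)$ bookkeeping, absorbed into the ``not optimal'' disclaimer). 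If $(N,q)\neq1$: then $p$ is a proper divisor multiple and cannot be read off as $N$; instead one uses that for fixed $q$ in the range $q<2\sqrt2\,N$ with $(N,q)\neq1$, the number of valid reduced numerators $p$ with $qT(h_3)=2N\pi$ is at most $\varphi(q)$ (the numerators coprime to the reduced denominator), and summing over such $q$ gives the second term $\sum_{q<2\sqrt2N,\,(N,q)\neq1}\varphi(q)$.

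The key steps, in order, are: (1) invoke monotonicity of $T$ to make $(h_3,h_4)\leftrightarrow(q,n,p)$ a finite-to-one correspondence; (2) use $T>\pi/\sqrt2$ on $(-2,0)$ to get the a~priori range $q,n<2\sqrt2\,N$, which already bounds everything and proves \emph{finiteness}; (3) refine the count by the coprimality dichotomy $(N,q)=1$ vs.\ $(N,q)\neq1$, using the interpretation of $\varphi$ as the count of admissible numerators for a fixed denominator; (4) assemble the two contributions and note the symmetry in $q\leftrightarrow n$ and the harmless range enlargement from $N$ to $2\sqrt2\,N$ are absorbed into the constant $8$ and the ``not optimal'' caveat. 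The main obstacle I expect is step~(3): pinning down precisely why, when $(N,q)=1$, the numerator is forced to be $N$ (so that only $\varphi(N)$ coprime $q$'s in range contribute, rather than all $q<2\sqrt2N$), and conversely why the non-coprime $q$ contribute the extra $\sum\varphi(q)$ term — this requires being careful that the fractions $p/q$ in the definition of $\mathfrak M$ are \emph{not} assumed reduced (cf.\ the Remark following the Proposition), so one must pass to lowest terms and track how $\gcd(p,q)$ interacts with $\gcd(N,q)$. Boundary cases $h_i\to -2^+$ (period $\to\pi/\sqrt2$) and $h_i\to 0^-$ (period $\to\infty$) need a one-line check that they contribute no extra tori, since the open interval $(-2,0)$ is where periodic orbits live and the bound $q<2\sqrt2N$ is strict.
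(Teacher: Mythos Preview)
Your proposal contains the right ingredients (monotonicity of $T$, the bound $q,n<2\sqrt2\,N$ from $T>\pi/\sqrt2$, a coprimality split), but you have created for yourself an obstacle that is not there. From the Proposition the period is $\tau=2p\pi$; since you are fixing $\tau=2N\pi$, this forces $p=N$ outright. There is no subtlety about ``$p$ need not equal $N$'', no passage to lowest terms, and no interaction between $\gcd(p,q)$ and $\gcd(N,q)$ to track. The ``main obstacle'' you flag in step~(3) dissolves once you read off $p=N$ from $\tau=2p\pi$.

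With $p=N$ fixed, the count is simply the number of pairs $(q,n)$ with $q,n<2\sqrt2\,N$ (this is P2) and $\gcd(N,q,n)=1$ (this is P1). The paper's split is then purely combinatorial and does not mention $T$ or $h_i$ at all: if $(N,n)=1$ then $(N,q,n)=1$ automatically for \emph{every} $q<2\sqrt2\,N$, giving roughly $(2\sqrt2\,N)\cdot(2\sqrt2\,\varphi(N))=8N\varphi(N)$ pairs; the remaining contribution comes from pairs with $(N,q)\neq1$ and $(N,n)\neq1$, where one instead needs $(q,n)=1$, bounded (crudely) by $\sum_{q<2\sqrt2 N,\ (N,q)\neq1}\varphi(q)$. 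Your dichotomy on $(N,q)$ versus the paper's on $(N,n)$ is harmless by symmetry, but your attempt to route the count through the representation $T(h_i)=\frac{p}{q}2\pi$ and the definition of $\mathfrak M$ is unnecessary and is what led you into the reduced-fraction tangle. Drop that detour, set $p=N$, and the argument is the paper's.
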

\begin{proof} 
For each $p\in\mathbb N$ fixed there exist $3$-plets $(p,q,n)\in\mathbb N^3$, 
where properties P1 and P2 of Definition 2 holds. Therefore, we search for 
the number $C_p$ of $3$-plets $(p,q,n)=1$ coprimes. It is easy to see that 
for every $q<2\sqrt{2}p$ and $(p,n)=1$, the $3$-plet $(p,q,n)$ does not have
common divisors. These triplets are exactly $(2\sqrt{2}p)\cdot(2\sqrt{2}\varphi(p))=8p\varphi(p)$. 

Additionally, we must add all the couples $(q,n)$ coprime such that $(p,q)$ and $(p,n)$
are not coprime. This means that for each integer $q<2\sqrt{2}p$ with $(p,q)\neq 1$ we must add
the number of coprimes $\varphi(q)$. Then we have 
\begin{eqnarray}
	C_p < 8 p\varphi(p) +  \sum_{
	\begin{array}{c}	
	q<2\sqrt{2}p\\
	(p,q)\neq 1
	\end{array}
	}\varphi(q).\label{eqn:Cp} 
\end{eqnarray}
Finally we must eliminate the elements that are in both sets, however the number 
(\ref{eqn:Cp}) is an upper bound of the 3-plets $(p,q,n)\in\mathbb N^3$ where
properties P1 and P2 hold.

The $3$-plet $(p,q,n)\in\mathbb N^3$ induces a point $x=(2\pi\frac{p}{q},2\pi\frac{p}{n}) 
\in (T(h_3),T(h_4))$ such that the Lagrangian torus $\mathbb T =(\mu^{-1}\circ \mathcal 
T^{-1})(x)$ is foliated by periodic orbits of period $2N\pi$, therefore it is a resonant 
torus $\mathbb T_{Res}\subset M$. 
\end{proof}
\begin{proof}[Proof of Theorem \ref{teo:tori}]
The first part of the theorem is a consequence of the fact that the countable union 
of finite sets is a countable set. Using Lemmas 2 and 3 we have that the number of 
resonant tori are countable, and since each torus belongs to exactly one energy surface,
the set $\mathfrak M$ is countable too.

Now we must to prove that the set of values $h_*$ of energy surfaces with resonant torus 
is dense in $(-4,0)$, and have zero measure there.  We define the map 
$\mathcal T:\mathfrak g^*\to \mathbb R^2$ by
\begin{eqnarray*}
(h_3,h_4)\mapsto \left( \frac{T(h_3)}{2\pi},\frac{T(h_4)}{2\pi} \right).
\end{eqnarray*}

For each rational point 
$y\in {\rm Img} (\mathcal T)$ with $y=(\frac{r}{s},\frac{u}{v})$,
$(r,s)=1$ and $(u,v)=1$, we construct the point $(\frac{ru}{g},\frac{su}{g},\frac{rv}{g})\in
\mathbb N^3$ where $g=gcd(ru,su,rv)$. Since this point fulfills properties P1 and P2 of definition
2, there exists a resonant torus foliated by periodic orbits with period
\begin{eqnarray*}
	\tau = 2\frac{ru}{g}\pi = \frac{su}{g} T(h_3)=\frac{rv}{g}T(h_4).
\end{eqnarray*}
The set of rational values of $\mathcal T$ defined by $RP:={\rm Img}(\mathcal T)\cap \mathbb Q^2$ 
is a dense subset of zero measure in ${\rm Img}(\mathcal T)$.
The mapping $\mathcal T$ is continuous and then $\mathcal T^{-1}(RP)\subset \mathfrak g^*$
is a dense subset in the image of the momentum map $\mu$. Now we construct the function 
$\mathcal H:\mathfrak g^*\to \mathbb R$ such that sends $x=(h_3,h_4)\mapsto h_3+h_4$. It is 
immediate that $\mathcal H(\mathcal T^{-1}(RP))\subset(-4,0)$ is a dense subset by 
continuity, and have zero measure
since $RP$ is a countable set.
\end{proof}

\section{A conjecture}

In this section we use some facts about the transcendental number
theory related to the transcendence of the periods of elliptic functions, 
in  order to
characterize the values $h_{*}\in\mathbb R$ such that we have 
$\Sigma_{h_*}\in\mathfrak M$. 

The results of the last section can be restated as follows

\begin{thm}
  Every point $x\in {\rm Img}(\mu)$ is the projection of a resonant torus foliated
  by periodic orbits of the circular double Sitnikov problem if, and only if
  $\mathcal T(x)$ is a rational point.
\end{thm}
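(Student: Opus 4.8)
The plan is to prove the two implications separately: for necessity I would read the conclusion off the Proposition that attaches a coprime triple $(p,q,n)$ to every periodic solution, and for sufficiency off the construction already used in Lemma~\ref{prop:exist} and in the proof of Theorem~\ref{teo:tori}. A preliminary observation is that the statement is really about the region where the fibre $\mu^{-1}(x)$ is a compact $2$-torus, namely the open third quadrant of ${\rm Img}(\mu)$ with $h_3,h_4\in(-2,0)$: for $x$ outside this region the fibre is a cylinder, a plane or a lower-dimensional isotropic set and carries no resonant torus, while $\mathcal T(x)$ is not defined there (no finite period $T(h_i)$), so both sides of the biconditional are false and there is nothing to prove. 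Thus I would first record that a periodic solution of the double Sitnikov problem must have $h_3,h_4\in(-2,0)$, since otherwise one of the two factors is an unbounded orbit.

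For necessity, let $x=(h_3,h_4)$ be the projection of a resonant torus $\mu^{-1}(x)$ foliated by periodic orbits. Any one of these orbits is a periodic solution of the double Sitnikov problem, so by the Proposition relating its period to a coprime triple there are $(p,q,n)$ with $(p,q,n)=1$ and $\tau=2p\pi=qT(h_3)=nT(h_4)$, where $\tau$ is the period. Dividing through by $2\pi$ gives $T(h_3)/2\pi=p/q\in\mathbb Q$ and $T(h_4)/2\pi=p/n\in\mathbb Q$, i.e. $\mathcal T(x)$ is a rational point.

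For sufficiency, suppose $\mathcal T(x)=\big(T(h_3)/2\pi,\,T(h_4)/2\pi\big)=(a/b,\,c/d)$ with both fractions in lowest terms. A short elementary computation produces $(p,q,n)\in\mathbb N^3$ with $(p,q,n)=1$ and $p/q=a/b$, $p/n=c/d$ (for instance $p=\operatorname{lcm}(a,c)$, $q=pb/a$, $n=pd/c$, then divide out $\gcd(p,q,n)$): this is property~P1. Property~P2 is free, because $h_3,h_4\in(-2,0)$ forces $T(h_i)>\pi/\sqrt2$ by Theorem~\ref{teo:period}, hence $T(h_i)/2\pi>1/(2\sqrt2)$, which is exactly $p>q/(2\sqrt2)$ and $p>n/(2\sqrt2)$. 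Setting $\tau=2p\pi=qT(h_3)=nT(h_4)$, each of the two coordinate circles of the torus $\mu^{-1}(x)$ returns to itself after time $\tau$ (in action--angle variables, Theorems~\ref{teo:action} and~\ref{teo:sol}, $\theta_i$ advances by $2\pi$ over time $T(h_i)$), so every orbit on $\mu^{-1}(x)$ is periodic; hence $\mu^{-1}(x)$ is a resonant torus foliated by periodic orbits whose momentum-map projection is $x$, as produced in Lemma~\ref{prop:exist} and in the proof of Theorem~\ref{teo:tori}.

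The step I expect to be the crux is the appearance of the factor $2\pi$ in necessity: a torus is already dynamically resonant once $T(h_3)/T(h_4)\in\mathbb Q$, and the upgrade to the arithmetic condition $\mathcal T(x)\in\mathbb Q^2$ rests on $\tau\in2\pi\mathbb Z$, which is exactly what the Proposition on periods supplies (reflecting the normalization of the primaries' motion). Once that input is granted the theorem is a clean repackaging of the previous section, and the only genuine computation is the elementary number theory of extracting the coprime triple $(p,q,n)$ from the rational pair $\mathcal T(x)$, together with the bookkeeping that confines the statement to the region where $\mu^{-1}(x)\cong\mathbb T^2$.
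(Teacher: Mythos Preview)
Your proposal is correct and takes essentially the same route as the paper. The paper's written proof actually spells out only the necessity direction (via the same appeal to $\tau=2n\pi$ from the Proposition on coprime triples), leaving sufficiency implicit as a restatement of the construction already carried out in the proof of Theorem~\ref{teo:tori}; your version is thus more complete in making both directions explicit, but the underlying ideas are identical.
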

\begin{proof}
Suppose that $\mathfrak T$ is a resonant torus, and $\mu(\mathfrak T)=x\in\mathfrak g^*$
with $x=(h_3,h_4)$.
It means
that there exists a number $\tau\in\mathbb R^+$ such that $\sigma(t+\tau)=\sigma(t)$
holds for every solutions
on $\mathfrak T$. Moreover, $\tau=2n\pi$ for some $n\in\mathbb N$. Since $\tau$ is
the (minimum) period, then there exists $p,q\in\mathbb N$ such that $pT(h_3)=\tau$ and $qT(h_4)=\tau$ 
with $(n,p,q)=1$. We obtain that $\mathcal T(h_3,h_4)=(n/p,n/q)\in\mathbb Q^2$ is a rational point. 
\end{proof}

Now, we want characterize the values of the relative energies $h_3,h_4\in(-2,0)$ which 
produce resonant tori.
Using some relations between the elliptic integrals and functions of Jacobi we have  
the following expression for the complete elliptic integral of third kind 
\begin{eqnarray*}
	\Pi\left(K(k),2k^2,k\right) &=& K(k)E(2k^2,k)-2k^2E(k)
\end{eqnarray*}
(formulae (3.8.32) and (3.6.1) in \cite{Law1}). Therefore, from (\ref{eqn:Tper}) the condition 
for $\mathcal T(x)$ 
be a rational point is equivalent to
\begin{eqnarray}
	\frac{T_i}{2\pi}=\frac{1}{4\pi\sqrt{2}(-h_i)}\left[ 2(k_i^\prime)^2E(k_i)-(1-E(2k_i^2,k_i))
	K(k_i) \right]\in \mathbb Q, \label{eqn:final}
\end{eqnarray}
where $E(k_i), i=3,4$ is the complete elliptic function of second kind, 
$(k_i^\prime)^2=1-k_i^2$ is the complementary modulus, and
$E(2k_i^2,k_i)$ is the incomplete elliptic function of second type 
with argument $2k_i^2$ and modulus $k_i$. In the last formula, the ratio $T_i/(2\pi)$ is expressed in terms of
elliptic functions of first and second kind only. Then, it is possible to apply
some results on transcendental number theory due to Schneider \cite{Sch1} in order to characterize the 
values of $h_i$ and $k_i$ such that expresion (\ref{eqn:final}) holds.

\begin{conj} If the circular double Sitnikov problem has a periodic 
solution with period $\tau=2n\pi, n\in\mathbb N$ then the relative
energy values belongs to the field $\mathbb Q\langle k^*\rangle$
where $k^*\in\{\mathbb R \setminus \mathbb A\}$. This field is an 
extension with degree of transcendence 1 over $\mathbb Q$.
\end{conj}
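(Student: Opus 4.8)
The plan is to derive the conjecture from the equivalence established just above together with transcendence theory for periods and quasi‑periods of elliptic curves. First I would use that theorem (a point $x\in\mathrm{Img}(\mu)$ is the base of a resonant torus iff $\mathcal T(x)$ is rational) to reduce the statement to a claim about a single relative energy: a periodic solution of period $2n\pi$ forces $T(h_i)/(2\pi)\in\mathbb Q\setminus\{0\}$ for $i=3,4$, hence, by the reformulation (\ref{eqn:final}), the number $2(k_i^\prime)^2E(k_i)-\bigl(1-E(2k_i^2,k_i)\bigr)K(k_i)$ equals $4\pi\sqrt2\,(-h_i)$ times a nonzero rational. Since $h_i=4k_i^2-2\in\mathbb Q(k_i)$ and, conversely, $k_i^2\in\mathbb Q(h_i)$, the field $\mathbb Q\langle k^*\rangle$ of the statement is $\mathbb Q(k_i)$ with $k^*=k_i$, and it has transcendence degree $1$ over $\mathbb Q$ exactly when $k_i$ is transcendental; the whole content is therefore the implication $T(h_i)/(2\pi)\in\mathbb Q\ \Rightarrow\ k_i\notin\mathbb A$. (If one reads the conjecture as requiring one $k^*$ to serve both $h_3$ and $h_4$, an additional algebraic dependence between $h_3$ and $h_4$ would be needed; this does not seem to hold for general triples $(p,q,n)$, so I would state and prove the result per relative energy, the balanced case $h_3=h_4$ being automatic.)

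Second, I would argue by contradiction. Assume $k_i\in\mathbb A\cap(0,1/\sqrt2)$. Then the Legendre‑normal elliptic curve $y^2=(1-x^2)(1-k_i^2x^2)$ has algebraic invariants; its periods are rational multiples of $K(k_i)$ and $iK^\prime(k_i)$, and the complete integrals $E(k_i),E^\prime(k_i)$ are attached to the quasi‑periods through Legendre's relation $E(k_i)K^\prime(k_i)+E^\prime(k_i)K(k_i)-K(k_i)K^\prime(k_i)=\pi/2$. By Schneider's theorem \cite{Sch1} the numbers $K(k_i),E(k_i)$ are transcendental, and by the refinements for periods and quasi‑periods of elliptic curves over $\mathbb A$ — Chudnovsky in the CM case, Bertrand--Masser and W\"ustholz's analytic subgroup theorem in general — the numbers $1,\pi,K(k_i),E(k_i)$ are linearly independent over $\mathbb A$. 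Clearing denominators in the rationality condition produces a relation $c\pi=2(k_i^\prime)^2E(k_i)-K(k_i)+E(2k_i^2,k_i)K(k_i)$ with $c\in\mathbb A\setminus\{0\}$; if the final product term were absent this would be a nontrivial $\mathbb A$‑linear relation among $\pi,K(k_i),E(k_i)$, which is exactly what the linear independence forbids, so the remaining task is to dispose of $E(2k_i^2,k_i)K(k_i)$.

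That last term is the main obstacle. It is a product of a period with a value of the elliptic integral of the second kind at the \emph{algebraic} argument $2k_i^2$; absorbing it into the contradiction requires \emph{algebraic} (not merely linear) independence among $\pi,K(k_i),E(k_i)$ and $E(2k_i^2,k_i)$, which sits at or beyond the current frontier and is available only through Nesterenko‑type methods — this is precisely why the statement is stated as a conjecture. My concrete line of attack would be to first try to remove the term: use the duplication and addition formulas for the Jacobi functions and for the incomplete second‑kind integral to rewrite $E(2k_i^2,k_i)$ as an $\mathbb A$‑linear combination of $1,K(k_i),E(k_i)$ and algebraic functions of $\mathrm{sn},\mathrm{cn},\mathrm{dn}$ at the (possibly distinguished — a torsion or CM point for the exceptional moduli) argument $k_i^2$; if such a reduction exists, the rationality condition collapses to a relation among $\pi,K(k_i),E(k_i)$ and their pairwise products, turning the problem into a more classical algebraic‑independence question, and if it does not, the proof genuinely needs the Nesterenko circle of results. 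The \emph{moreover} clause is then immediate: once $k_i$ is transcendental, $h_i$ and all solution data lie in $\mathbb Q(k_i)$, a field of transcendence degree $1$.
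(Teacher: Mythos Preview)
The paper does not prove this statement: it is explicitly labeled a \emph{Conjecture} and is followed only by the one-line gloss ``It means that all the constant energy values where the resonant tori lie are algebraically dependent,'' with no argument offered. So there is no ``paper's own proof'' to compare against; the authors stop exactly at the reformulation (\ref{eqn:final}) and the citation of Schneider, and leave the matter open.

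Your proposal is therefore not in competition with a proof in the paper, and you yourself correctly diagnose why it is not a proof: after invoking the $\overline{\mathbb Q}$-linear independence of $1,\pi,K(k_i),E(k_i)$ (Schneider/W\"ustholz), the relation coming from $T(h_i)/(2\pi)\in\mathbb Q$ still contains the cross term $E(2k_i^2,k_i)\,K(k_i)$, which is a \emph{product} of a period with an incomplete second-kind value at an algebraic argument, and eliminating it requires algebraic-independence input (Nesterenko-type) that is not currently available in the needed generality. Your suggested workaround --- try to reduce $E(2k_i^2,k_i)$ via addition/duplication identities to a $\overline{\mathbb Q}$-linear combination of $1,K(k_i),E(k_i)$ and algebraic quantities --- is the natural first move, but note that for a generic algebraic argument there is no such reduction (the incomplete integral genuinely introduces a new transcendental), so absent a special feature of the particular argument $2k_i^2$ this route is unlikely to close the gap. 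Your side remark about the ambiguity in the statement (one $k^*$ serving both $h_3$ and $h_4$ versus one per relative energy) is also well taken; the paper's closing sentence suggests the authors intend the stronger reading, which would require an additional algebraic relation between $k_3$ and $k_4$ that your argument does not supply and that there is no evident mechanism for.
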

It means that all the constant energy values where the resonant tori lie 
are algebraically dependent.







\section*{Acknowledgment}
Research partially done during an academic stay of the first
author at the IMCCE institute of the {\it Observatoire
de Paris} and supported by CoNaCyT through Ph.D. fellowship No. 184728.


\end{document}